
\documentclass[reqno, eucal]{amsart}

\usepackage[mathscr]{eucal} 
\usepackage[dvips]{graphicx}
\usepackage[dvips]{color}
\usepackage{amsmath,amsfonts,amssymb,amsthm,amscd}
\usepackage{epic}
\usepackage{eepic}
\usepackage{longtable}
\usepackage{array}

\setlength{\topmargin}{-1cm}
\setlength{\oddsidemargin}{-0.0cm}
\setlength{\evensidemargin}{-0.0cm}
\setlength{\textwidth}{16cm}
\setlength{\textheight}{24cm}

\setcounter{tocdepth}{2}

\newtheorem{theorem}{Theorem}
\newtheorem{lemma}[theorem]{Lemma}

\newtheorem{proposition}[theorem]{Proposition}
\newtheorem{corollary}[theorem]{Corollary}

\theoremstyle{definition}

\newtheorem{remark}[theorem]{Remark}

\newcommand{\Z}{{\mathbb Z}}
\newcommand{\Q}{{\mathbb Q}}

\newcommand{\C}{{\mathbb C}}

\newcommand{\I}{{\mathrm i}}

\newcommand{\ok}{{\rm{\bf k}}}
\newcommand{\OK}{{\rm{\bf K}}}
\newcommand{\am}{{\rm{\bf a}}^{\!-} }
\newcommand{\ap}{{\rm{\bf a}}^{\!+} }
\newcommand{\apm}{{\rm{\bf a}}^{\!\pm} }
\newcommand{\amp}{{\rm{\bf a}}^{\!\mp} }
\newcommand{\Am}{{\rm{\bf A}}^{\!-} }
\newcommand{\Ap}{{\rm{\bf A}}^{\!+} }
\newcommand{\Apm}{{\rm{\bf A}}^{\!\pm} }
\newcommand{\Amp}{{\rm{\bf A}}^{\!\mp} }
\newcommand{\h}{{\bf h}}

\newcommand{\hf}{{\scriptstyle \frac{1}{2}}}

\newcommand{\ot}{\otimes}


\vfuzz2pt 
\hfuzz2pt 

\begin{document}

\title[Reflection $K$ matrices]{Reflection $\boldsymbol{K}$ matrices associated with 
\\ an Onsager coideal of 
$\boldsymbol{U_p(A^{(1)}_{n-1})}, 
\boldsymbol{U_p(B^{(1)}_{n})}, 
\boldsymbol{U_p(D^{(1)}_{n})}$ 
and $\boldsymbol{U_p(D^{(2)}_{n+1})}$}

\author{Atsuo Kuniba}
\address{Atsuo Kuniba, Institute of Physics, 
University of Tokyo, Komaba, Tokyo 153-8902, Japan}
\email{atsuo.s.kuniba@gmail.com}

\author{Masato Okado}
\address{Masato Okado, Department of Mathematics, Osaka City University, 
Osaka, 558-8585, Japan}
\email{okado@sci.osaka-cu.ac.jp}

\author{Akihito Yoneyama}
\address{Akihito Yoneyama, Institute of Physics, 
University of Tokyo, Komaba, Tokyo 153-8902, Japan}
\email{yoneyama@gokutan.c.u-tokyo.ac.jp}

\maketitle

\vspace{0.5cm}
\begin{center}{\bf Abstract}
\end{center}

We determine the intertwiners of a family of Onsager coideal subalgebras 
of the quantum affine algebra $U_p(A^{(1)}_{n-1})$ 
in the fundamental representations and
$U_p(B^{(1)}_{n}), U_p(D^{(1)}_{n}), U_p(D^{(2)}_{n+1})$
in the spin representations.
They reproduce the reflection $K$ matrices 
obtained recently by the matrix product construction connected to the 
three dimensional integrability.
In particular the present approach provides the first proof of the 
reflection equation for the non type $A$ cases.

\vspace{0.4cm}

\section{Introduction}\label{s:int}

Reflection equation or boundary Yang-Baxter equation \cite{Ch,Kul,Sk} 
plays a fundamental role in quantum integrable systems with boundaries. 
It is a quartic relation of the so called $R$ and $K$ matrices
encoding the interactions in the bulk and at the boundary, respectively.
In the recent work \cite{KP} 
new solutions to the reflection equation 
including a spectral parameter
were obtained associated with the quantum affine algebra 
$U_p(A^{(1)}_{n-1})$ and 
conjectured for 
$U_p(D^{(2)}_{n+1}),
U_p(B^{(1)}_n)$ and $U_p(D^{(1)}_n)$.
They possess a number of distinct features.
Firstly the $K$ matrices act on the vector space of the form $(\C^2)^{\otimes n}$,
and as this structure may indicate, 
the companion $R$ matrices are those associated with the 
fundamental representations\footnote{The antisymmetric tensor 
representations of general degree between $0$ and $n$ whose degree 1 case 
corresponds to the vector representation. } of $U_p(A^{(1)}_{n-1})$
or the spin representation of 
$U_p(B^{(1)}_{n}), U_p(D^{(1)}_{n}), U_p(D^{(2)}_{n+1})$.
This is in contrast to the most preceding works which are concerned with 
the vector representation or $U_p(A^{(1)}_1)$.
See for example \cite{BFKZ, ML, MLu, NR,RV}.
Secondly the $K$ matrices are trigonometric 
and {\em dense} in that {\em all} the elements are 
nontrivial rational function of the (multiplicative) spectral parameter.
Thirdly each element of the $K$ matrices admits 
a {\em matrix product formula} 
in terms of $q$-Bosons.
Lastly the family of solutions labeled with $n$ are descendants of a {\em single} 
integrable structure in the three dimension called 
{\em quantized reflection equation} (Appendix \ref{app:qre}) 
which was extracted 
from the representation theory of 
the quantized coordinate ring $A_q(\mathrm{Sp}_4)$ 
\cite{KO1}\footnote{The relation between 
$q$ of $A_q$ and $p$ of $U_p$ is given by (\ref{qp}).}.

In this paper we show that the $K$ matrices in \cite{KP} are nothing 
but the intertwiners of the {\em Onsager coideal} subalgebras inside $U_p$,
and thereby establish the reflection equation 
in the non type $A$ cases 
$U_p(B^{(1)}_{n}), U_p(D^{(1)}_{n}), U_p(D^{(2)}_{n+1})$
for the first time.
The result gives a {\em characterization} of the $K$ matrices
in the framework of $U_p$, which is an essential complement 
to the {\em construction} by the matrix product method originating in 
$A_q(\mathrm{Sp}_4)$.

Let us briefly recall the approach to 
the reflection equation based on the coideal subalgebras of $U_p$,
which goes back to the affine Toda field theory with boundaries \cite{DM}.   
See also \cite{DM,Ko,RV} and references therein.
Let $e_i, f_i, k^{\pm 1}_i\,(i=0,\ldots,n')$ be the generators of 
the Drinfeld-Jimbo quantum affine algebra $U_p$ \cite{D,Ji}, where 
$n'$ is defined in (\ref{np}). 
The elements 
\begin{align}\label{ta}
b_i =  f_i + \beta_i  k^{-1}_i e_i + \gamma_i k^{-1}_i  \in U_p
\quad (0 \le i \le n')
\end{align}
behave as
$\Delta b_i = k_i^{-1}\otimes b_i +
(f_i + \beta_i k^{-1}_i e_i) \otimes 1$
under the coproduct $\Delta$ in (\ref{Del})
for any coefficients $\beta_i, \gamma_i$.
Thus the subalgebra $\mathcal{B}\subset U_p$ 
generated by $b_0,\ldots, b_{n'}$ satisfies 
$\Delta \mathcal{B} \subset U_p \otimes \mathcal{B}$
meaning that it is a left coideal\footnote{
Choosing the left and the right coideals is a matter of convention
transformed by an automorphism of $U_p$. They fit 
the reflection at the right and the left boundary of the system, respectively.
Here we adopt the left coideal to adjust to \cite{KP}.}.
Regarding the boundary reflection
as a linear map $K$ from a representation $(\pi_z,V_z)$ of $U_p$ to 
its dual $(\pi^\ast_{z^{-1}}, V^\ast_{z^{-1}})$ with the reversed spectral parameter, 
one has 
\begin{align}\label{tdk}
K(z,p): V_z \rightarrow V^\ast_{z^{-1}},
\qquad
K(z,p)\pi_z(b) = \pi^\ast_{z^{-1}}(b)K(z,p)\quad (b \in \mathcal{B}).
\end{align}
Here the latter equation is postulating that the $U_p$-symmetry in the bulk 
is not preserved fully at the boundary 
but still survives as a smaller symmetry corresponding to $\mathcal{B}$.
By composing the above $K$ and the $R$ matrices, 
one can construct a linear map
$V_x \otimes V_y \rightarrow V^\ast_{x^{-1}} \otimes V^\ast_{y^{-1}}$ 
commuting with $\Delta\mathcal{B}$ in two ways
along the two sides of the reflection equation.
See (\ref{zu}).
Therefore if a solution $K$ to (\ref{tdk}) exists uniquely and 
$V_x \otimes V_y$ is irreducible as a $\mathcal{B}$ module,
the reflection equation follows as a corollary.
In other words, the intertwining relation (\ref{tdk}) for such a coideal $\mathcal{B}$
achieves {\em linearization} of the reflection equation,
eliminating the task of proving the original quartic relation ``manually".
This is a boundary analogue of the classic idea that 
the cubic Yang-Baxter equation is attributed to the linear equation 
$[\check{R}, \Delta U_p]$ =0 representing the $U_p$-symmetry \cite{D,Ji}. 

The coideal $\mathcal{B}$ must be small enough;
otherwise the intertwining relation (\ref{tdk}) may not allow a solution.
Nonetheless it must also be large enough;
otherwise $V_x \otimes V_y$ may not become irreducible 
as a $\mathcal{B}$ module.
In this way one is led to a fundamental question;
what is the right ``size" or choice of the coideal $\mathcal{B}$  
in order to make the linearization work legitimately for a
given representation $V_z$?
To our knowledge it is still an outstanding issue in general.

This paper presents such coideals
$\mathcal{B}=\langle b_0, \ldots, b_{n'}\rangle$ that control the $K$ matrices
for the prototypical setting;  
the fundamental representations of $U_p(A^{(1)}_{n-1})$
and the spin representation of 
$U_p(B^{(1)}_{n})$, $U_p(D^{(1)}_{n})$, $U_p(D^{(2)}_{n+1})$.
They are realized by a specific choice of the coefficients in (\ref{ta}).
It turns out in all the cases that the resulting 
generators $b_i$'s form a closed set of relations among themselves
known as the {\em generalized $p$-Onsager algebras} \cite{BB}.
This algebra has drawn considerable attention since 
its first appearance at $p=1$ \cite{On} 
and has been generalized significantly.
See for example \cite[Rem. 9.1]{T} for the early history, 
\cite[Sec.1(1)]{Ko} for an account of more recent studies 
and the references therein. 

In this paper a left coideal subalgebra of $U_p$ 
isomorphic to a generalized $p$-Onsager algebra 
will just be called an {\em Onsager coideal} for short.
In this terminology all the sought coideals in our examples are Onsager coideals.
This is another main observation in this paper.
Although it is yet to be understood conceptually, we remark that
a quite parallel result has already been established in \cite{KOY}
for another prominent example; 
the symmetric tensor representations of $U_p(A^{(1)}_{n-1})$.

The paper is organized as follows.
In Section \ref{sec:qa} 
basic definitions of 
the relevant quantum affine algebras and 
their representations are recalled.
Then the specific Onsager coideals are introduced.
Their intertwiners are unique and  
automatically satisfy the reflection equation. 
In Section \ref{sec:mp} 
matrix product construction of the 
$R$ matrices \cite{BS,KS} and $K$ matrices \cite{KP} 
are quoted in the form adapted to the present setting.
In Section \ref{sec:mr}
the main result of the paper, 
Theorem \ref{th:A}, \ref{th:BD2} and \ref{th:D} are stated.
The latter two establish for the first time  
the matrix product solutions in the non type $A$ cases.
In Section \ref{sec:p} 
a proof of the main result is given.
In Section \ref{s:gen}
a multi-parameter generalization is formulated and 
the relation to the Onsager coideals is explained.
Section \ref{s:cdk} is a summary.

Appendix \ref{app:qre} 
recalls the quantized reflection equation \cite{KP}
which is the basic ingredient 
in the three dimensional approach.
This description is included since Section \ref{s:gen} requires 
a slight parametric generalization of \cite{KP}.
Appendix \ref{app:B} 
lists the explicit forms of the generators of the 
Onsager coideals.
We use the notation:
\begin{align*}
(z;q)_m = \prod_{j=1}^m(1-zq^{j-1}),
\qquad 
\binom{m}{r}_q = \frac{(q;q)_m}{(q;q)_{m-r}(q;q)_r}.
\end{align*}

\section{Quantum affine algebras}\label{sec:qa}

Let 
$U_p=U_p(A^{(1)}_{n-1})\,(n\ge 2), 
U_p(D^{(2)}_{n+1})\, (n \ge 2), 
U_p(B^{(1)}_n)\,(n\ge 3),
U_p(\tilde{B}^{(1)}_n)\,(n\ge 3), 
U_p(D^{(1)}_n)\, (n\ge 3)$
be the quantum affine algebras without derivation operator \cite{D,Ji}. 
We assume that $p$ is generic throughout.
For convenience we use the notation
\begin{align}
n'&=\begin{cases}
n-1 & \text{for}\; \, A^{(1)}_{n-1},\\
n  & \text{for } \;D^{(2)}_{n+1}, B^{(1)}_n, \tilde{B}^{(1)}_n, D^{(1)}_n,
\end{cases}
\label{np}
\\
\mathfrak{g}^{\mathrm{tr}} &= A^{(1)}_{n-1},\quad
\mathfrak{g}^{1,1} = D^{(2)}_{n+1},\quad
\mathfrak{g}^{2,1} = B^{(1)}_{n},\quad
\mathfrak{g}^{1,2} = \tilde{B}^{(1)}_{n},\quad
\mathfrak{g}^{2,2} = D^{(1)}_n.
\end{align}
$U_p$ is a Hopf algebra 
generated by $e_i, f_i, k^{\pm 1}_i\, (0 \le i \le n')$ satisfying the relations
\begin{equation}\label{uqdef}
\begin{split}
&k_i k^{-1}_i = k^{-1}_i k_i = 1,\quad [k_i, k_j]=0,\\
&k_ie_jk^{-1}_i = p_i^{a_{ij}}e_j,\quad 
k_if_jk^{-1}_i = p_i^{-a_{ij}}f_j,\quad
[e_i, f_j]=\delta_{i,j}\frac{k_i-k^{-1}_i}{p_i-p^{-1}_i},\\
&\sum_{\nu=0}^{1-a_{ij}}(-1)^\nu
e^{(1-a_{ij}-\nu)}_i e_j e_i^{(\nu)}=0,
\quad
\sum_{\nu=0}^{1-a_{ij}}(-1)^\nu
f^{(1-a_{ij}-\nu)}_i f_j f_i^{(\nu)}=0\;\;(i\neq j),
\end{split}
\end{equation}
where $e^{(\nu)}_i = e^\nu_i/[\nu]_{p_i}!, \,
f^{(\nu)}_i = f^\nu_i/[\nu]_{p_i}!$
and 
$[m]_p! = \prod_{k=1}^m [k]_p$ with 
$[m]_p = \frac{p^m-p^{-m}}{p-p^{-1}}$.
The constants $p_i\, (0 \le i \le n')$ in (\ref{uqdef}) are all 
taken as $p_i=p^2$ except the following:
\begin{align}
p_0 = p_n = p \;\;\text{for }D^{(2)}_{n+1},\qquad
p_n = p \;\; \text{for }B^{(1)}_{n},\qquad
p_0 = p \;\; \text{for }\tilde{B}^{(1)}_{n}.
\end{align}
Note that $p_0=p^{r}, \,p_n=p^{r'}$ for $\mathfrak{g}^{r,r'}$.
The Cartan matrix $(a_{ij})_{0 \le i,j \le n'}$ is 
determined from the Dynkin diagrams 
of the relevant affine Lie algebras according to the convention of \cite{Kac}:

\begin{picture}(400, 150)(-31,-8)

\put(30,94){
\put(-10,24){$\mathfrak{g}^{\mathrm{tr}}= A^{(1)}_{n-1}$}
\drawline(20,3)(67,30)
\put(70,30){\circle{6}}
\drawline(73,30)(120,3)
\multiput( 20,0)(20,0){2}{\circle{6}}
\multiput(100,0)(20,0){2}{\circle{6}}
\multiput(23,0)(20,0){2}{\line(1,0){14}}
\put(83,0){\line(1,0){14}}\put(103,0){\line(1,0){14}}
\put(20,-6){\makebox(0,0)[t]{$1$}}
\put(40,-6){\makebox(0,0)[t]{$2$}}
\put(100,-6){\makebox(0,0)[t]{$n\!\! -\!\! 2$}}
\put(122,-6){\makebox(0,0)[t]{$n\!\! -\!\! 1$}}
\put(67,17){0}
}


\put(220,94){
\put(20,24){$\mathfrak{g}^{1,1}= D^{(2)}_{n+1}$}
\multiput( 0,0)(20,0){2}{\circle{6}}
\multiput(80,0)(20,0){2}{\circle{6}}
\put(23,0){\line(1,0){14}}
\put(62.5,0){\line(1,0){14}}
\multiput(2.85,-1)(0,2){2}{\line(1,0){14.3}} 
\multiput(82.85,-1)(0,2){2}{\line(1,0){14.3}} 
\multiput(39,0)(4,0){6}{\line(1,0){2}} 
\put(10,0.2){\makebox(0,0){$<$}}
\put(90,0.2){\makebox(0,0){$>$}}
\put(0,-6){\makebox(0,0)[t]{$0$}}
\put(20,-6){\makebox(0,0)[t]{$1$}}
\put(80,-6){\makebox(0,0)[t]{$n\!\! -\!\! 1$}}
\put(100,-7.8){\makebox(0,0)[t]{$n$}}
}

\put(-20,20){
\put(30,24){$\mathfrak{g}^{2,1}= B^{(1)}_{n}$}
\put(6,14){\circle{6}}\put(6,-14){\circle{6}}
\put(20,0){\circle{6}}
\multiput(80,0)(20,0){2}{\circle{6}}

\put(23,0){\line(1,0){14}}
\put(62.5,0){\line(1,0){14}}
\put(18,3){\line(-1,1){9}} \put(18,-3){\line(-1,-1){9}}

\multiput(82.85,-1)(0,2){2}{\line(1,0){14.3}} 
\multiput(39,0)(4,0){6}{\line(1,0){2}} 
\put(90,0){\makebox(0,0){$>$}}
\put(-2,19){\makebox(0,0)[t]{$0$}}
\put(-2,-11){\makebox(0,0)[t]{$1$}}
\put(20,-6){\makebox(0,0)[t]{$2$}}
\put(80,-6){\makebox(0,0)[t]{$n\!\! -\!\! 1$}}
\put(100,-7.8){\makebox(0,0)[t]{$n$}}
}


\put(130,20){
\put(21,24){$\mathfrak{g}^{1,2}=\tilde{B}^{(1)}_{n}$}
\put(93,14){\circle{6}}\put(93,-14){\circle{6}}
\multiput(0,0)(20,0){2}{\circle{6}}
\put(80,0){\circle{6}}
\put(23,0){\line(1,0){14}}
\put(63,0){\line(1,0){14}}

\put(82,3){\line(1,1){9}}\put(82,-3){\line(1,-1){9}}

\multiput(2.85,-1)(0,2){2}{\line(1,0){14.3}} 
\multiput(39,0)(4,0){6}{\line(1,0){2}} 
\put(10,0){\makebox(0,0){$<$}}
\put(108,18){\makebox(0,0)[t]{$n\!\! -\!\! 1$}}
\put(0,-6){\makebox(0,0)[t]{$0$}}
\put(20,-6){\makebox(0,0)[t]{$1$}}
\put(71,-6){\makebox(0,0)[t]{$n\!\! -\!\! 2$}}
\put(104,-12){\makebox(0,0)[t]{$n$}}
}


\put(280,20){
\put(26,24){$\mathfrak{g}^{2,2}= D^{(1)}_n$}
\put(6,14){\circle{6}}\put(6,-14){\circle{6}}
\put(20,0){\circle{6}}
\put(80,0){\circle{6}}
\put(93,14){\circle{6}}\put(93,-14){\circle{6}}

\put(18,3){\line(-1,1){9}} \put(18,-3){\line(-1,-1){9}}
\put(23,0){\line(1,0){14}}
\multiput(39,0)(4,0){6}{\line(1,0){2}} 
\put(62.5,0){\line(1,0){14}}
\put(82,3){\line(1,1){9}}\put(82,-3){\line(1,-1){9}}

\put(-2,19){\makebox(0,0)[t]{$0$}}
\put(-2,-11){\makebox(0,0)[t]{$1$}}
\put(20,-6){\makebox(0,0)[t]{$2$}}
\put(71,-6){\makebox(0,0)[t]{$n\!\! -\!\! 2$}}
\put(108,18){\makebox(0,0)[t]{$n\!\! -\!\! 1$}}
\put(104,-12){\makebox(0,0)[t]{$n$}}

}

\end{picture}

Here the affine Lie algebra $\tilde{B}^{(1)}_n$ is just 
$B^{(1)}_n$ but only with different enumeration of the nodes as shown above.
We keep it for uniformity of description.
Thus for instance in $U_p(D^{(2)}_{n+1})$, 
one has $a_{01}=-2, a_{10}=-1$ 
and $k_0e_1 = p^{-2} e_1 k_0$, $k_1 e_0 = p^{-2}e_0k_1$
and $k_1 e_1 = p^4 e_1k_1$.
We also note that $A^{(1)}_1$ is exceptional in that 
$a_{00}=a_{11}=-a_{01}=-a_{10}=2$.
We employ the coproduct $\Delta$ of the form 
\begin{align}\label{Del}
\Delta k^{\pm 1}_i = k^{\pm 1}_i\otimes k^{\pm 1}_i,\quad
\Delta e_i = 1\otimes e_i + e_i \otimes k_i,\quad
\Delta f_i = f_i\otimes 1 + k^{-1}_i\otimes f_i.
\end{align}
The opposite coproduct is denoted by $\Delta^{\!\mathrm{op}} = P \circ \Delta$,
where $P(u\otimes v) = v \otimes u$ is the exchange of the 
components.

\subsection{Representations $\pi_z, \pi^m_z, \pi^\pm_z$}\label{ss:rep}

We introduce the labeling set of the bases of the relevant representations as
\begin{align}
\mathrm{sp} &= 
\{\alpha=(\alpha_1,\ldots, \alpha_n) \in \{0,1\}^n\},
\label{sp}\\
\mathrm{sp}_m &= \{\alpha \in \mathrm{sp}\mid  |\alpha|=m\},
\quad |\alpha| = \alpha_1+\cdots + \alpha_n,
\label{spm}\\
\mathrm{sp}^\pm &=\{(\alpha_1,\ldots, \alpha_n) \in \mathrm{sp}
\mid (-1)^{|\alpha|}=\pm 1\}.
\end{align} 
Let 
${\bf e}_j = (0,\ldots, \overset{j\mathrm{th}}{1}, \ldots, 0) \in \Z^n$ be the 
elementary vector whose unique non vanishing element 1 is located 
at the $j$th component from the left. 
It should not be confused with the generator $e_j$ of $U_p$.
We consider the representation 
\begin{align}
\pi_z: \;U_p(\mathfrak{g}^{\mathrm{tr}}),  U_p(\mathfrak{g}^{r,r'})
\rightarrow \mathrm{End} V_z,
 \qquad V_z = \bigoplus_{\alpha \in \mathrm{sp}} 
\C(z) v_\alpha
\end{align}
given as follows:
\begin{alignat}{3}
\mathfrak{g}^{\mathrm{tr}} = A^{(1)}_{n-1};  \quad
e_jv_\alpha 
&= v_{\alpha-{\bf e}_j+{\bf e}_{j+1}},
&\quad
f_jv_\alpha
&= v_{\alpha+{\bf e}_j-{\bf e}_{j+1}},
&\quad
k_j v_\alpha
&= p^{2(\alpha_{j+1}-\alpha_{j})}v_{\alpha},\quad (j \in \Z_n),
\nonumber\\
\mathfrak{g}^{1,r'}= D^{(2)}_{n+1}, \tilde{B}^{(1)}_{n};   \quad
e_0v_\alpha 
&= zv_{\alpha+{\bf e}_1},
&\quad
f_0v_\alpha 
&= z^{-1}v_{\alpha-{\bf e}_1},
&\quad
k_0v_\alpha 
&= p^{2\alpha_1-1}v_{\alpha},
\nonumber\\
\mathfrak{g}^{2,r'}= B^{(1)}_{n}, D^{(1)}_{n};   \quad
e_0v_\alpha 
&= z^2v_{\alpha+{\bf e}_1+{\bf e}_2},
&\quad
f_0v_\alpha 
&= z^{-2}v_{\alpha-{\bf e}_1-{\bf e}_2},
&\quad
k_0v_\alpha 
&= p^{2(\alpha_1+\alpha_2-1)}v_{\alpha},
\nonumber\\
\mathfrak{g}^{r, r'}(1 \le r,r'\le 2);   \quad
e_jv_\alpha 
&= v_{\alpha-{\bf e}_j+{\bf e}_{j+1}},
&\quad
f_jv_\alpha
&= v_{\alpha+{\bf e}_j-{\bf e}_{j+1}},
&\quad
k_j v_\alpha
&= p^{2(\alpha_{j+1}-\alpha_{j})}v_{\alpha},\quad (0<j<n),
\nonumber\\
\mathfrak{g}^{r,1}=D^{(2)}_{n+1}, B^{(1)}_{n};   \quad
e_nv_\alpha 
&= v_{\alpha-{\bf e}_n},
&\quad
f_nv_\alpha 
&= v_{\alpha+{\bf e}_n},
&\quad
k_nv_\alpha 
&= p^{1-2\alpha_n}v_{\alpha},
\nonumber\\
\mathfrak{g}^{r,2}=\tilde{B}^{(1)}_{n}, D^{(1)}_{n};   \quad
e_nv_\alpha 
&= v_{\alpha-{\bf e}_{n-1}-{\bf e}_n},
&\quad
f_nv_\alpha 
&= v_{\alpha+{\bf e}_{n-1}+{\bf e}_n},
&\quad
k_nv_\alpha 
&= p^{2(1-\alpha_n-\alpha_{n-1})}v_{\alpha},
\label{rep}
\end{alignat}
where for example $f_j$ actually means $\pi_z(f_j)$.
The symbol $v_\beta$ on the RHS with $\beta \not\in \mathrm{sp}$
is to be understood as $0$. 
For $U_p(\mathfrak{g}^{\mathrm{tr}})$, the representation 
$(\pi_z, V_z)$ is decomposed as
\begin{align}\label{adec}
\pi^m_z: U_p(\mathfrak{g}^{\mathrm{tr}})  &\rightarrow V^m_z,
\qquad
V_z^m = \bigoplus_{\alpha \in \mathrm{sp}_m}\C(z)v_\alpha,
\qquad
V_z = V^0_z \oplus \cdots \oplus V^n_z,
\end{align}
where each component $(\pi^m_z, V^m_z)$ is irreducible and
called the $m$-th {\em fundamental representation}.
For $U_p(\mathfrak{g}^{r,r'})$, the representation
$(\pi_z, V_z)$ is irreducible except for 
$(r,r')=(2,2)$.
In the latter case, it decomposes into 
$(\pi^+_z, V^+_z)$ and $(\pi^-_z, V^-_z)$ as
\begin{align}\label{ddec}
\pi^\pm_z: U_p(\mathfrak{g}^{2,2})  &\rightarrow V^\pm_z,
\qquad
V^\pm_z = \bigoplus_{\alpha \in \mathrm{sp}^\pm}\C(z)v_\alpha,
\qquad
V_z = V^+_z \oplus V^-_z.
\end{align}
The representations $(\pi_z, V_z)$ and  
$(\pi^\pm_z, V^\pm_z)$ of $U_p(\mathfrak{g}^{r,r'})$ 
are called the {\em spin representations}.

\subsection{{\mathversion{bold}$\ast$}-dual representations}\label{ss:sd}

Let $\ast$ be the algebra anti-automorphism given by 
\begin{align}
e^\ast_i = e_i, \quad f^\ast_i = f_i, \quad k^\ast_i = k^{-1}_i.
\end{align}
The $\ast$-dual representation $(\pi^\ast_z,V^\ast_z)$
\begin{align}
\pi^\ast_z: \;U_p(\mathfrak{g}^{\mathrm{tr}}),  U_p(\mathfrak{g}^{r,r'})
\rightarrow \mathrm{End} V^\ast_z,
 \qquad V^\ast_z = \bigoplus_{\alpha \in \mathrm{sp}} 
\C(z) v^\ast_\alpha
\end{align}
of $(\pi_z,V)$ is defined by 
\begin{align}
\langle \pi^\ast_z(g)v^\ast, v'\rangle = 
\langle v^\ast, \pi_z(g^\ast)v'\rangle
\qquad 
(g \in U_p, v^\ast \in V^\ast_z, v' \in V_z).
\label{sd}
\end{align}
Here $\langle\;,\;\rangle$ denotes the dual pairing 
$\langle v^\ast_\alpha, v_\beta\rangle = \delta_{\alpha,\beta}$.
Practically in our case, the $\ast$-dual representations are obtained from 
(\ref{rep}) by formally identifying
$v^\ast_\alpha = v_{{\bf 1}-\alpha}$,
where ${\bf 1} = (1,\ldots, 1)= {\bf e}_1+\cdots + {\bf e}_n$.

According to (\ref{adec}) and (\ref{ddec}),
one has the decompositions:
\begin{align}
\pi^{m \ast}_z: U_p(\mathfrak{g}^{\mathrm{tr}})  
&\rightarrow V^{m \ast}_z,
\qquad
V^{m \ast}_z = \bigoplus_{\alpha \in \mathrm{sp}_m}\C(z)v^\ast_\alpha,
\qquad
V^\ast_z = V^{0\ast}_z \oplus \cdots \oplus V^{n\ast}_z,
\label{adec2}\\
\pi^{\pm \ast}_z: U_p(\mathfrak{g}^{2,2})  
&\rightarrow V^{\pm \ast}_z,
\qquad
V^{\pm \ast}_z 
= \bigoplus_{\alpha \in \mathrm{sp}^{\pm}}
\C(z)v^\ast_\alpha,
\qquad
V^\ast_z = V^{+\ast}_z \oplus V^{-\ast}_z.
\label{ddec2}
\end{align}

\subsection{Intertwiner of {\mathversion{bold}$U_p$}; 
{\mathversion{bold}$R$} matrix}
Intertwiners of $\Delta U_p$ are called $R$ matrices.
We will be concerned with the three kinds of $R$ matrices as 
\begin{alignat}{2}
\mathscr{R}(x/y,p) &\in \mathrm{End}(V_x\otimes V_y),
&\qquad \mathscr{R}(z,p) (v_\alpha \otimes v_\beta) &= 
\sum_{\gamma,\delta \in \mathrm{sp}} 
\mathscr{R}(z,p)^{\gamma,\delta}_{\alpha,\beta}
v_\gamma \otimes v_\delta,
\label{R}\\
\mathscr{R}^\ast(x/y,p) &\in \mathrm{End}(V_x\otimes V^\ast_y),
&\qquad \mathscr{R}^\ast(z,p) (v_\alpha \otimes v^\ast_\beta) &= 
\sum_{\gamma,\delta \in \mathrm{sp}}
\mathscr{R}^\ast(z,p)^{\gamma,\delta}_{\alpha,\beta}
v_\gamma \otimes v^\ast_\delta,
\label{Rs}\\
\mathscr{R}^{\ast\ast}(x/y,p) &\in \mathrm{End}(V^\ast_x\otimes V^\ast_y),
&\qquad \mathscr{R}^{\ast\ast}(z,p) (v^\ast_\alpha \otimes v^\ast_\beta) &= 
\sum_{\gamma,\delta \in \mathrm{sp}}
\mathscr{R}^{\ast\ast}(z,p)^{\gamma,\delta}_{\alpha,\beta}
v^\ast_\gamma \otimes v^\ast_\delta.
\label{Rss}
\end{alignat}
From the remark after (\ref{sd}), one may set
\begin{align}\label{rmm}
\mathscr{R}^\ast(z,p)^{\gamma,\delta}_{\alpha, \beta}
= \mathscr{R}(z,p)^{\gamma,{\bf 1}-\delta}_{\alpha,{\bf 1}-\beta},
\qquad
\mathscr{R}^{\ast\ast}(z,p)^{\gamma,\delta}_{\alpha, \beta}
= \mathscr{R}(z,p)^{{\bf 1}-\gamma,{\bf 1}-\delta}_{{\bf 1}-\alpha,{\bf 1}-\beta}.
\end{align}

The $R$ matrices are characterized up to normalization by the 
intertwining relations:
\begin{align}
\Delta^{\!\rm op}(g) \mathscr{R}(x/y,p) 
&= \mathscr{R}(x/y,p) \Delta(g), 
\label{ir2}\\
\Delta^{\!\rm op}(g) \mathscr{R}^\ast(x/y,p) 
&= \mathscr{R}^\ast(x/y,p) \Delta(g), 
\label{ir3}\\
\Delta^{\!\rm op}(g) \mathscr{R}^{\ast\ast}(x/y,p) 
&= \mathscr{R}^{\ast\ast}(x/y,p) \Delta(g)
\label{ir4}
\end{align}
for $g \in U_p$.
According to (\ref{adec}) and (\ref{ddec}), 
one has the decompositions:
\begin{alignat}{2}
\mathscr{R}(x/y,p) &= \bigoplus_{0 \le m,m' \le n}
\mathscr{R}_{m,m'}(x/y,p),
&\qquad
\mathscr{R}_{m,m'}(x/y,p) 
&\in \mathrm{End}(V^m_x\otimes V^{m'}_y)
\quad \text{for} \;\; U_p(\mathfrak{g}^{\mathrm{tr}}),
\label{rde1}\\
\mathscr{R}(x/y,p) &= \bigoplus_{\sigma,\sigma' =\pm}
\mathscr{R}_{\sigma,\sigma'}(x/y,p),
&\qquad
\mathscr{R}_{\sigma,\sigma'}(x/y,p) 
&\in \mathrm{End}(V^{\sigma}_x\otimes V^{\sigma'}_y)
\quad \text{for}\;\; U_p(\mathfrak{g}^{2,2})
\label{rde2}
\end{alignat}
and similarly for 
$\mathscr{R}^\ast(x/y,p)$ and $\mathscr{R}^{\ast\ast}(x/y,p)$.
The $R$ matrices satisfy the Yang-Baxter equations \cite{Bax,D,Ji}
\begin{align}
\mathscr{R}_{12}(z_{12},p)\mathscr{R}_{13}(z_{13},p)
\mathscr{R}_{23}(z_{23},p)
&= \mathscr{R}_{23}(z_{23},p)\mathscr{R}_{13}(z_{13},p)
\mathscr{R}_{12}(z_{12},p),
\\
\mathscr{R}_{12}(z_{12},p)\mathscr{R}^\ast_{13}(z_{13},p)
\mathscr{R}^\ast_{23}(z_{23},p)
&= \mathscr{R}^\ast_{23}(z_{23},p)\mathscr{R}^\ast_{13}(z_{13},p)
\mathscr{R}_{12}(z_{12},p),
\\
\mathscr{R}^\ast_{12}(z_{12},p)\mathscr{R}^\ast_{13}(z_{13},p)
\mathscr{R}^{\ast\ast}_{23}(z_{23},p)
&= \mathscr{R}^{\ast\ast}_{23}(z_{23},p)\mathscr{R}^\ast_{13}(z_{13},p)
\mathscr{R}^\ast_{12}(z_{12},p),
\\
\mathscr{R}^{\ast\ast}_{12}(z_{12},p)\mathscr{R}^{\ast\ast}_{13}(z_{13},p)
\mathscr{R}^{\ast\ast}_{23}(z_{23},p)
&= \mathscr{R}^{\ast\ast}_{23}(z_{23},p)\mathscr{R}^{\ast\ast}_{13}(z_{13},p)
\mathscr{R}^{\ast\ast}_{12}(z_{12},p),
\end{align}
where $z_{ij}=z_i/z_j$.
Here the notation is standard;
$\mathscr{R}_{ij}(z,p)$ for instance denotes the $\mathscr{R}(z,p)$ 
that acts on the $i$th and the $j$th components from the left 
in the three-fold tensor product and as the identity in the other one.
For $U_p(\mathfrak{g}^{\mathrm{tr}})$ and 
$U_p(\mathfrak{g}^{2,2})$, the Yang-Baxter equations actually 
hold in finer subspaces corresponding to the decompositions 
(\ref{rde1}) and (\ref{rde2}).

At this stage we do not specify the normalization of the $R$ matrices.
One typical choice will be given in (\ref{no1})--(\ref{no3}) via Theorem \ref{th:r}.
The basic properties of these $R$ matrices like the spectral decomposition
have been described in \cite{DO} for $U_p(A^{(1)}_{n-1})$,
in \cite{O} for $U_p(B^{(1)}_n)$, $U_p(D^{(1)}_n)$ and 
in \cite{KS} for $U_p(D^{(2)}_{n+1})$.

\subsection{Left coideal subalgebras}\label{ss:coi}

Let $\epsilon, s, s'$ be the parameters obeying
\begin{align}\label{ss12}
ss'=-\I \epsilon p^{-1}, \quad 
\epsilon = \pm 1,
\end{align}
where $\I = \sqrt{-1}$.
For $U_p(\mathfrak{g}^{\mathrm{tr}})$ we introduce 
the left coideal subalgebra $\mathcal{B}^{\mathrm{tr}}$
generated by $b_0, \ldots, b_{n-1}$ given by
\begin{align}
b_i &= f_i +p^{2} k^{-1}_ie_i 
+ \frac{\I \epsilon p}{1-p^2}k^{-1}_i\quad (i \in \Z_n).
\label{bai}
\end{align}
For $U_p(\mathfrak{g}^{r,r'})$ 
we introduce the left coideal subalgebra $\mathcal{B}^{r,r'}_{k,k'}$
labeled with $k,k' \in \{1,2\}$ satisfying $r\le k$ and $r'\le k'$.
So there are {\em nine} pairs
$\mathcal{B}^{r,r'}_{k,k'} \subset U_p(\mathfrak{g}^{r,r'})$ 
of such kind.
They are generated by $b_0, \ldots, b_n$ defined by
\begin{align}
b_0 &= f_0 + \Bigl(\frac{s'}{s}p\Bigr)^r k_0^{-1}e_0 
+d_{r,k}(s,s') k_0^{-1},
\label{b0}
\\
b_i &= f_i +p^{2} k^{-1}_ie_i 
+ \frac{\I \epsilon p}{1-p^2}k^{-1}_i\quad (0 < i <n),
\label{bi}\\
b_n &= f_n + \Bigl(\frac{s}{s'}p\Bigr)^{r'} k_n^{-1}e_n
+d_{r',k'}(s',s) k_n^{-1},
\label{bn}
\end{align}
where $d_{1,1}(u,v), d_{1,2}(u,v).d_{2,2}(u,v)$ are given by
\begin{align}\label{drk}
d_{1,1}(u,v) = (u^{-1}+v)\frac{\I p}{1-p^2},
\qquad
d_{1,2}(u,v) = 0,
\qquad
d_{2,2}(u,v) = \frac{\I \epsilon p v}{u(1-p^2)}.
\end{align}
For convenience we list the elements $b_i$  in Appendix \ref{app:B}.
The coideals $\mathcal{B}^{\mathrm{tr}}$ and $\mathcal{B}^{r,r'}_{k,k'}$ 
are examples of generalized $p$-Onsager algebras \cite{BB}.
This aspect will be explained in Section \ref{ss:o},
in a more generalized setting including further parameters.

\subsection{Intertwiner of coideal; {\mathversion{bold}$K$} matrix}\label{ss:cok}

Intertwiners of the coideals are called $K$ matrices.
It is a linear map 
\begin{align}\label{zuk}
\mathscr{K}(z,p):  V_z \rightarrow V^\ast_{z^{-1}},
\qquad
\mathscr{K}(z,p) v_\alpha = \sum_{\beta \in \mathrm{sp}} 
\mathscr{K}(z,p)^\beta_\alpha v^\ast_\beta
\end{align}
satisfying the intertwining relation
\begin{align}
\mathscr{K}(z,p) \pi_z(b) &= \pi^\ast_{z^{-1}}(b) \mathscr{K}(z,p),
\label{ir1}
\end{align}
where $b \in \mathcal{B}^{\mathrm{tr}}$ for $U_p(\mathfrak{g}^{\mathrm{tr}})$ 
and $b \in \mathcal{B}^{r,r'}_{k,k'}$ for $U_p(\mathfrak{g}^{r,r'})$.
If $V_z$ is irreducible as a module over 
$\mathcal{B}^{\mathrm{tr}}$ or  
$\mathcal{B}^{r,r'}_{k,k'}$,
then (\ref{ir1}) characterizes the $K$ matrix 
up to normalization.

Consider the two maps going from 
$V_x\otimes V_y$ to $V^\ast_{x^{-1}} \otimes V^\ast_{y^{-1}}$ 
composed of $R$ and $K$ matrices as
\begin{align}\label{zu}
\begin{picture}(200,210)(-70,50)
\put(0,250){$V_x \otimes V_y$}
\put(-5,245){\vector(-2,-1){45}}\put(-65,239){$P\mathscr{R}(x/y)$}
\put(38,245){\vector(2,-1){45}}\put(61,240){$\mathscr{K}_2(y)$}
\put(-70,207){$V_y \otimes V_x$}
\put(-52,197){\vector(0,-1){32}}\put(-83,181){$\mathscr{K}_2(x)$}
\put(-70,150){$V_y \otimes V^\ast_{x^{-1}}$}
\put(-52,141){\vector(0,-1){32}}\put(-96,123){$P\mathscr{R}^\ast(xy)$}
\put(-72,95){$V^\ast_{x^{-1}}\otimes V_y$}
\put(70,207){$V_x \otimes V^\ast_{y^{-1}}$}
\put(88,197){\vector(0,-1){32}}\put(93,181){$P\mathscr{R}^\ast(xy)$}
\put(67,150){$V^\ast_{y^{-1}}\otimes V_x$}
\put(88,141){\vector(0,-1){32}}\put(93,123){$\mathscr{K}_2(x)$}
\put(66,95){$V^\ast_{y^{-1}}\otimes V^\ast_{x^{-1}}$}
\put(-48,87){\vector(2,-1){40}}\put(-61,70){$\mathscr{K}_2(y)$}
\put(87,87){\vector(-2,-1){37}}\put(75,70){$P\mathscr{R}^{\ast\ast}(x/y)$}
\put(0,60){$V^\ast_{x^{-1}} \otimes V^\ast_{y^{-1}}$}
\end{picture}
\end{align}
Here $P$ is the transposition introduced after (\ref{Del}), 
$\mathscr{R}(z), \mathscr{R}^\ast(z), \mathscr{R}^{\ast\ast}(z)$ 
are abbreviation of 
$\mathscr{R}(z,p)$, 
$\mathscr{R}^\ast(z,p)$, 
$\mathscr{R}^{\ast\ast}(z,p)$, respectively and 
$\mathscr{K}_2(z) = 1 \otimes \mathscr{K}(z,p)$.
Since the $K$ matrices act only on the right component,
the both maps commute with 
$\Delta \mathcal{B} \subset U_p \otimes \mathcal{B}$
($\mathcal{B} = \mathcal{B}^{\mathrm{tr}}$ or $\mathcal{B}^{r,r'}_{k,k'}$)
owing to the intertwining relations 
(\ref{ir2})--(\ref{ir4}) and (\ref{ir1}).
Therefore if 
$V_x \otimes V_y$ with generic $x,y$ 
is irreducible as a $\mathcal{B}$ module and the $R$ and the $K$ matrices are 
properly normalized, 
the above diagram implies the reflection equation \cite{DM}
\begin{align}\label{re1}
\mathscr{K}_2(y)\mathscr{R}^{\ast}_{21}(xy)
\mathscr{K}_1(x)\mathscr{R}_{12}(x/y)
=
\mathscr{R}^{\ast\ast}_{21}(x/y)\mathscr{K}_1(x)
\mathscr{R}^{\ast}_{12}(xy)\mathscr{K}_2(y),
\end{align}
where 
$\mathscr{R}_{12}(z) = \mathscr{R}(z,p)$, 
$\mathscr{R}^\ast_{21}(z)= P\mathscr{R}^\ast(z,p)P$,
$\mathscr{R}^\ast_{12}(z)= \mathscr{R}^\ast(z,p)$,
$\mathscr{R}^{\ast\ast}_{21}(z)= P\mathscr{R}^{\ast\ast}(z,p)P$ 
and 
$\mathscr{K}_1(z) = \mathscr{K}(z,p) \otimes 1$.
In terms of the matrix elements for the 
transition $v_\alpha \otimes v_\beta \mapsto 
v^\ast_{\alpha'''} \otimes v^\ast_{\beta'''}$, it reads
\begin{equation}\label{re12}
\begin{split}
&\sum_{\alpha',\alpha'', \beta', \beta'' \in \mathrm{sp}} 
\mathscr{K}(y,p)^{\beta'''}_{\beta''}
\mathscr{R}^\ast(xy,p)^{\beta'', \alpha'''}_{\beta', \alpha''}
\mathscr{K}(x,p)^{\alpha''}_{\alpha'}
\mathscr{R}(x/y,p)^{\alpha', \beta'}_{\alpha, \beta}
\\
&=\sum_{\alpha',\alpha'', \beta', \beta'' \in \mathrm{sp}} 
\mathscr{R}^{\ast\ast}(x/y,p)^{\beta''', \alpha'''}_{\beta'',\alpha''}
\mathscr{K}(x,p)^{\alpha''}_{\alpha'}
\mathscr{R}^\ast(xy,p)^{\alpha', \beta''}_{\alpha, \beta'}
\mathscr{K}(y,p)^{\beta'}_\beta.
\end{split}
\end{equation}
As the set of equations this is equivalent, due to (\ref{rmm}), to 
\begin{equation}\label{re123}
\begin{split}
&\sum_{\alpha',\alpha'', \beta', \beta'' \in \mathrm{sp}} 
\mathscr{K}^\ast(y,p)^{\beta'''}_{\beta''}
\mathscr{R}(xy,p)^{\beta'', \alpha'''}_{\beta', \alpha''}
\mathscr{K}^\ast(x,p)^{\alpha''}_{\alpha'}
\mathscr{R}(x/y,p)^{\alpha', \beta'}_{\alpha, \beta}
\\
&=\sum_{\alpha',\alpha'', \beta', \beta'' \in \mathrm{sp}} 
\mathscr{R}(x/y,p)^{\beta''', \alpha'''}_{\beta'',\alpha''}
\mathscr{K}^\ast(x,p)^{\alpha''}_{\alpha'}
\mathscr{R}(xy,p)^{\alpha', \beta''}_{\alpha, \beta'}
\mathscr{K}^\ast(y,p)^{\beta'}_\beta,
\end{split}
\end{equation}
where $\mathscr{K}^\ast(z,p)^\beta_\alpha 
:= \mathscr{K}(z,p)^{{\bf 1}-\beta}_{\alpha}$.
This is the form which essentially agrees with 
\cite[eqs.(73), (82)]{KP}.
The $K(z)$ in \cite{KP} corresponds to 
$\mathscr{K}^\ast(z,p)$ here 
having the matrix elements $\mathscr{K}^\ast(z,p)^\beta_\alpha$.
As seen in (\ref{re123}), using $\mathscr{K}^\ast(z,p)$ 
enables one to formulate the reflection equation without 
involving $\mathscr{R}^\ast(z,p)$ and $\mathscr{R}^{\ast\ast}(z,p)$. 

When the intertwining relations
(\ref{ir2})--(\ref{ir4}) and (\ref{ir1}) 
are decomposed into the ones in finer irreducible submodules over the coideals,  
the same argument shows that the reflection equation holds individually 
in those subspaces.

At this stage we do not specify the normalization of the $K$ matrices.
One typical example will be given in (\ref{no4})--(\ref{no6})
via Theorem \ref{th:A}, \ref{th:BD2} and \ref{th:D}. 
The reflection equations (\ref{re12}) and (\ref{re123}) are
diagrammatically expressed as follows:

\begin{picture}(300,200)(-70,-18)


\put(100,0){
\put(0,0){\line(0,1){160}}

\put(0,100){\vector(-1,2){30}}\put(-45,166){$\beta''', y^{-1}$}
\put(0,100){\line(-1,-2){50}}\put(-60,-12){$\beta, y$}

\put(0,50){\vector(-2,1){80}}\put(-101,95){$\alpha''', x^{-1}$}
\put(0,50){\line(-2,-1){80}}\put(-100,1){$\alpha, x$}

\put(-23,80){$\beta''$}
\put(-37,45){$\beta'$}

\put(-14,60){$\alpha''$}
\put(-17,32){$\alpha'$}
}

\put(280,0){
\put(-140,80){$=$}
\put(0,0){\line(0,1){160}}

\put(0,100){\vector(-2,1){80}}\put(-69,166){$\beta''', y^{-1}$}
\put(0,100){\line(-2,-1){80}}\put(-91,50){$\alpha, x$}

\put(0,50){\vector(-1,2){55}}\put(-108,145){$\alpha''', x^{-1}$}
\put(0,50){\line(-1,-2){25}}\put(-40,-11){$\beta, y$}

\put(-20,113){$\alpha''$}
\put(-13,85){$\alpha'$}

\put(-23,66){$\beta'$}
\put(-41,97){$\beta''$}
}
\end{picture}

The boundary reflection changes 
the spectral parameters $x$ and $y$ into $x^{-1}$ and $y^{-1}$.

\section{Matrix product construction}\label{sec:mp}

\subsection{{\mathversion{bold}$q$}-Bosons}

Let 
$F_q = \bigoplus_{m\ge 0}\C|m\rangle$
be the Fock space equipped with 
the $q$-Boson operators $\apm, \ok$.
Similarly let $F_{q^2}$ be the one with 
$\Apm, \OK$ as follows:
\begin{alignat}{3}
\am|m\rangle &= (1-q^{2m})|m-1\rangle,
\quad & \ap|m\rangle &= |m+1\rangle,
\quad & \ok&=q^{\bf h},
\\
\Am|m\rangle &= (1-q^{4m})|m-1\rangle,
\quad & \Ap|m\rangle &= |m+1\rangle,
\quad & \OK &= q^{2{\bf h}},
\\
{\bf h}|m\rangle &= m|m\rangle.
\end{alignat}
We use the same notation for the base vectors either for 
$F_q$ or $F_{q^2}$, which will not cause a confusion.
The following $q$-Boson relations hold\footnote{The operators $\ok, \OK$
in \cite{KP} differ from the ones here by the factors $q^\hf, q$ 
corresponding to the zero point energy.}:
\begin{align}\label{qb}
\ok \,\apm &= q^{\pm 1} \apm \ok,
\quad \apm \,\amp = 1-q^{1\mp 1}\ok^2,
\quad
\OK \,\Apm = q^{\pm 2} \Apm \OK,
\quad \Apm \Amp = 1-q^{2\mp 2}\OK^2.
\end{align}

Denote the dual space of $F_{q^r}\,(r=1,2)$ by 
$F^\ast_{q^r}= \bigoplus_{m \ge 0}\C\langle m|$  
such that $\langle m |m'\rangle = \delta_{m,m'}(q^{2r};q^{2r})_m$.
We endow them with the $q$-Boson action by 
\begin{alignat}{2}
\langle m | \ap &= \langle m-1 |(1-q^{2m}) ,
\quad & \langle m | \am &= \langle m+1 |,
\\
\langle m | \Ap &= \langle m-1 |(1-q^{4m}),
\quad & \langle m | \Am &= \langle m+1 |,
\\
\langle m | {\bf h} &= \langle m | m.
\end{alignat}
These definitions satisfy 
$(\langle m |X)|m'\rangle = \langle m |(X|m'\rangle)$.

\subsection{Boundary vectors}

For $r=1,2$, we introduce the elements called {\em boundary vectors}:
\begin{align}
&\langle \chi_r| = \sum_{m\ge 0}\frac{\langle rm|}{(q^{2r^2};q^{2r^2})_m} 
\in F^\ast_{q^2},\qquad\,
|\chi_r\rangle = \sum_{m\ge 0}\frac{|rm\rangle}{(q^{2r^2}; q^{2r^2})_m}
\in F_{q^2},
\label{xk}\\
&\langle\eta_r| = \sum_{m\ge 0}\frac{\langle rm|}{(q^{r^2};q^{r^2})_m} \in F^\ast_q,
\qquad\quad
|\eta_r\rangle = \sum_{m\ge 0}\frac{|rm\rangle}{(q^{r^2};q^{r^2})_m} \in F_q.
\label{xb}
\end{align}
They will be utilized 
in the matrix product constructions (\ref{Rrr}) and (\ref{kkk}).
They are characterized by 
\begin{align}
|\chi_r\rangle &= |\eta_r\rangle|_{q\rightarrow q^2},
\quad\qquad\qquad\quad\;\;\,
\langle \chi_r| = \langle \eta_r||_{q\rightarrow q^2},
\label{ce}\\
\apm |\eta_1\rangle &= (1 \mp q^{\hf(1\mp 1)}\ok)|\eta_1\rangle,
\qquad\,
\langle \eta_1| \apm = \langle \eta_1| (1\pm q^{\hf(1\pm 1)}\ok),
\label{et1}\\
\ap | \eta_2\rangle &= \am  | \eta_2\rangle, \qquad\qquad\qquad
\quad\;
\langle \eta_2 | \ap = \langle \eta_2 | \am.
\label{et2}
\end{align} 

\subsection{Matrix product construction from {\mathversion{bold}$L$}}
Define an operator $L$  by \cite{BS, KP}
\begin{align}
L= \begin{pmatrix}
L_{0,0}^{0,0} & 
L_{0,1}^{0,0} & 
L_{1,0}^{0,0} & 
L_{1,1}^{0,0} \\
L_{0,0}^{0,1} & 
L_{0,1}^{0,1} & 
L_{1,0}^{0,1} & 
L_{1,1}^{0,1} \\
L_{0,0}^{1,0} & 
L_{0,1}^{1,0} & 
L_{1,0}^{1,0} & 
L_{1,1}^{1,0} \\
L_{0,0}^{1,1} & 
L_{0,1}^{1,1} & 
L_{1,0}^{1,1} & 
L_{1,1}^{1,1} 
\end{pmatrix} 
= \begin{pmatrix}
1 & 0 & 0 & 0 \\
0 & \I \epsilon q \OK & \Am & 0\\
0 & \Ap & \I \epsilon q \OK & 0\\
0 & 0 & 0 & 1
\end{pmatrix}\qquad
(\epsilon=\pm 1).
\label{L}
\end{align}
We introduce the linear operators
$R(z,q), R^\ast(z,q), R^{\ast\ast}(z,q)$ 
with $R= R^{\mathrm{tr}}, R^{r,r'}$ with $r,r' \in \{1,2\}$  by
\begin{alignat}{2}
R(x/y,q) &\in \mathrm{End}(V_x\otimes V_y),
&\qquad R(z,q) (v_\alpha \otimes v_\beta) &= 
\sum_{\gamma,\delta \in \mathrm{sp}} R(z,q)^{\gamma,\delta}_{\alpha,\beta}
v_\gamma \otimes v_\delta,
\label{R1}\\
R^\ast(x/y,q) &\in \mathrm{End}(V_x\otimes V^\ast_y),
&\qquad R^\ast(z,q) (v_\alpha \otimes v^\ast_\beta) &= 
\sum_{\gamma,\delta  \in \mathrm{sp}} R^\ast(z,q)^{\gamma,\delta}_{\alpha,\beta}
v_\gamma \otimes v^\ast_\delta,
\label{Rs1}\\
R^{\ast\ast}(x/y,q) &\in \mathrm{End}(V^\ast_x\otimes V^\ast_y),
&\qquad R^{\ast\ast}(z,q) (v^\ast_\alpha \otimes v^\ast_\beta) &= 
\sum_{\gamma,\delta  \in \mathrm{sp}}
R^{\ast\ast}(z,q)^{\gamma,\delta}_{\alpha,\beta}
v^\ast_\gamma \otimes v^\ast_\delta,
\label{Rss1}
\end{alignat}
where the elements are given by the matrix product formulas:
\begin{align}
R^{\mathrm{tr}}(z,q)_{\alpha,\beta}^{\gamma,\delta} &=  
\varrho^{\mathrm{tr}}_{|\alpha|, |\beta|}(z)
\mathrm{Tr}\bigl(z^{{\bf h}}
L^{\gamma_1,\delta_1}_{\alpha_1, \beta_1}
\cdots
L^{\gamma_n,\delta_n}_{\alpha_n, \beta_n}\bigl),
\label{Rtr}
\\
R^{r,r'}(z,q)_{\alpha,\beta}^{\gamma,\delta}
 &=  \varrho^{r,r'}(z)
\langle \chi_r |z^{{\bf h}}
L^{\gamma_1,\delta_1}_{\alpha_1, \beta_1}
\cdots
L^{\gamma_n,\delta_n}_{\alpha_n, \beta_n}|\chi_{r'}\rangle\quad 
((r,r') \neq (2,2)),
\label{Rrr}
\\
R^{2,2}(z,q)_{\alpha,\beta}^{\gamma,\delta}
 &=  \varrho^{2,2}_{(-1)^{|\alpha|}, (-1)^{|\beta|}}(z)
\langle \chi_2 |z^{{\bf h}}
L^{\gamma_1,\delta_1}_{\alpha_1, \beta_1}
\cdots
L^{\gamma_n,\delta_n}_{\alpha_n, \beta_n}|\chi_2\rangle,
\label{R22}
\\
R^{\ast}(z,q)_{\alpha,\beta}^{\gamma,\delta}
&= R(z,q)_{\alpha,{\bf 1}-\beta}^{\gamma,{\bf 1}-\delta},
\quad
R^{\ast\ast }(z,q)_{\alpha,\beta}^{\gamma,\delta}= 
R(z,q)_{{\bf 1}-\alpha,{\bf 1}-\beta}^{{\bf 1}-\gamma,{\bf 1}-\delta}
\qquad (R= R^{\mathrm{tr}}, R^{r,r'}).
\label{r1-}
\end{align}
The normalization factors appearing here are taken as
\begin{align}
\varrho^{\mathrm {tr}}_{m,m'}(z) &=
(\I \epsilon q)^{-|m-m'|}(1-z q^{2|m-m'|}),
\\
\varrho^{r,r'}(z) &=
\frac{(z^{\max(r,r')};q^{2rr'})_\infty}
{(-z^{\max(r,r')}q^2;q^{2rr'})_\infty}\quad ((r,r') \neq (2,2)),
\\
\varrho^{2,2}_{\pm, \pm}(z)&= 
\frac{(z^2;q^8)_\infty}{(z^2q^4;q^8)_\infty},\quad 
\varrho^{2,2}_{\pm,\mp}(z)= 
\frac{\I \epsilon(z^2q^4;q^8)_\infty}{(z^2q^8;q^8)_\infty}.
\end{align}
For the definition of the symbol $|\alpha|$, see (\ref{spm}).
These choices make all the matrix elements of 
$R^{\mathrm {tr}}(z,q)$ and $R^{r,r'}(z,q)$ 
{\em rational} functions in $z$ and $q$.
For instance we have
\begin{align}
R^{\mathrm {tr}}(z,q) (
v_{{\bf e}_1+\cdots + {\bf e}_m}\otimes 
v_{{\bf e}_1+\cdots + {\bf e}_{m'}})
&= v_{{\bf e}_1+\cdots + {\bf e}_m}\otimes 
v_{{\bf e}_1+\cdots + {\bf e}_{m'}}
\quad (0 \le m, m' \le n),
\label{no1}\\
R^{r,r'}(z,q) (v_\alpha \otimes v_\alpha) 
&= v_\alpha \otimes v_\alpha
\quad (\alpha \in \mathrm{sp}),
\label{no2}\\
R^{2,2}(z,q) (v_{{\bf e}_1} \otimes v_0) 
&=v_{{\bf e}_1} \otimes v_0,
\quad
R^{2,2}(z,q)(v_0 \otimes v_{{\bf e}_1})
=v_0 \otimes v_{{\bf e}_1}.
\label{no3}
\end{align}
From the construction it is easy to see the properties:
\begin{align}
R(z,q)_{\alpha, \beta}^{\gamma, \delta} &= 0 \;\;
\text{unless}\;\; \alpha + \beta = \gamma+ \delta \quad 
(R= R^{\mathrm{tr}}, R^{r,r'}),
\label{wr}\\
R^{\mathrm{tr}}(z,q)_{\alpha, \beta}^{\gamma, \delta} &= 0 \;\;
\text{unless}\;\; |\alpha| = |\gamma|\;\;\text{and}\;\; |\beta|=|\delta|,
\\
R^{2,2}(z,q)_{\alpha, \beta}^{\gamma, \delta} &= 0 \;\;
\text{unless}\;\; |\alpha| \equiv  |\gamma|\;\;\text{and}\;\; |\beta| \equiv |\delta|
\mod 2.
\end{align}

Comparing (\ref{L}) and \cite[eq.(5)]{KP},
one can show that these matrix elements are related to 
$S(z)^{\gamma,\delta}_{\alpha,\beta}$ in 
\cite[eqs.(57), (58)]{KP} by
\begin{align}\label{epr}
R(z,q)^{\gamma,\delta}_{\alpha,\beta}
= \varphi(z) (\I \epsilon)^{|\beta-\gamma |}
S(z)^{\gamma,\delta}_{\alpha,\beta}
\qquad 
((R,S) = (R^{\mathrm{tr}}, S^{\mathrm{tr}}), (R^{r,r'}, S^{r,r'}))
\end{align}
with a scalar $\varphi(z)$ that only depends on the normalization.
Examples of $S^{\mathrm{tr}}(z), S^{r,r'}(z)$ are available in
\cite[App.C]{KP}. 
The factor $(\I \epsilon)^{|\beta-\gamma |}$ in (\ref{epr}) 
does not influence the reflection equation on account of (\ref{wr}).

\subsection{Matrix product construction from {\mathversion{bold}$G$}}
\label{ss:gk}
Define an operator $G$ by 
\begin{align}
G= 
\begin{pmatrix}
G^0_0 & G^0_1\\
G^1_0 & G^1_1
\end{pmatrix} = 
\begin{pmatrix}
\I s' \ok & \am\\
\ap & \I s \ok
\end{pmatrix},
\label{G}
\end{align}
where $s$ and $s'$ are parameters satisfying
\begin{align}\label{s12}
ss'=q.
\end{align}
The definition (\ref{G}) is a slight modification of \cite[eq.(6)]{KP}.
We introduce the linear operators
$K(z,q) = K^{\mathrm{tr}}(z,q)$ and $K^{r,r'}(z,q)$ by
\begin{align}\label{ka}
K(z,q):  V_z \rightarrow V^\ast_{z^{-1}},
\qquad
K(z,q) v_\alpha = \sum_{\beta \in \mathrm{sp}} 
K(z,q)^\beta_\alpha v^\ast_\beta.
\end{align}
Note that the dependence on $s, s'$ is suppressed in the notation.
The elements are given by the matrix product formulas:
\begin{align}
K^{\mathrm{tr}}(z,q)^\beta_\alpha 
&= \kappa^{\mathrm{tr}}_{|\alpha|}(z)
\mathrm{Tr}\bigl( z^{\h} G^{\beta_1}_{\alpha_1} 
\cdots G^{\beta_n}_{\alpha_n} \bigr),
\label{ktr}\\
K^{k,k'}(z,q)^\beta_\alpha 
&= \kappa^{k,k'}(z)
\langle \eta_k | z^{\h} G^{\beta_1}_{\alpha_1} \cdots G^{\beta_n}_{\alpha_n} 
|\eta_{k'}\rangle\quad ((k,k') \neq (2,2)),
\label{kkk}
\\
K^{2,2}(z,q)^\beta_\alpha 
&= \kappa^{2,2}_{(-1)^{|\alpha|}}(z)
\langle \eta_2 | z^{\h} G^{\beta_1}_{\alpha_1} \cdots G^{\beta_n}_{\alpha_n} 
|\eta_2\rangle.
\label{k22}
\end{align}
The operator $z^{\bf h}$ can be moved anywhere by means of 
\begin{align}\label{zG}
z^{\bf h}G^b_a = z^{b-a}\,G^b_a z^{\bf h}.
\end{align}
The normalization factors are taken as
\begin{align}
\kappa^{\mathrm{tr}}_l(z) &= (-\I)^{n}s^{-l}(s')^{l-n}(1-zq^n),
\\
\kappa^{k,k'}(z) &= (\I s')^{-n}
\frac{((q^nz)^{\max(k,k')}; q^{kk'})_\infty}
{(-q(q^nz)^{\max(k,k')}; q^{kk'})_\infty}
\quad ((k,k') \neq (2,2)),
\label{kak}\\
\kappa^{2,2}_+(z) &= \frac{s'}{s}\kappa^{2,2}_-(z) = (\I s')^{-n}
\frac{(q^{2n}z^2; q^4)_\infty}
{(q^{2n+2}z^2; q^4)_\infty}.
\end{align}
These choices make all the matrix elements of 
$K^{\mathrm {tr}}(z,q)$ and $K^{r,r'}(z,q)$ 
rational in $z$ and $q$.
For instance we have
\begin{align}
K^{\mathrm{tr}}(z,q)v_\alpha &= v^\ast_\alpha + \cdots
\quad (\alpha \in \mathrm{sp}),
\label{no4}\\
K^{r,r'}(z,q) v_0 &= v^\ast_0 + \cdots\quad ((k,k') \neq (2,2)),
\label{no5}\\
K^{2,2}(z,q)v_0 &= v^\ast_0 + \cdots, \quad 
K^{2,2}(z,q)v_{{\bf e}_1} =  v^\ast_{{\bf e}_1} + \cdots.
\label{no6}
\end{align}
By the construction they have the properties:
\begin{align}
K^{\mathrm{tr}}(z,q)_\alpha^\beta &= 0 
\;\; \text{unless}\;\;
|\alpha | = |\beta|,
\\
K^{2,2}(z,q)_\alpha^\beta &= 0 
\;\; \text{unless}\;\;
|\alpha | \equiv  |\beta| \mod 2,
\\
K^{\mathrm{tr}}(z,q)_{\alpha_1,\ldots, \alpha_n}^{\beta_1,\ldots, \beta_n}
&= K^{\mathrm{tr}}(z,q)^{\alpha_n,\ldots, \alpha_1}_{\beta_n,\ldots, \beta_1},
\\
K^{k,k'}(z,q)_{\alpha_1,\ldots, \alpha_n}^{\beta_1,\ldots, \beta_n}
&= z^{|\beta|-|\alpha|}
K^{k',k}(z,q)^{\alpha_n,\ldots, \alpha_1}_{\beta_n,\ldots, \beta_1},
\label{erk}
\end{align}
where the latter two relations can be derived from 
\cite[eq.(1)]{KP} and $\kappa^{k,k'}(z) = \kappa^{k',k}(z)$\footnote{
Eq.(\ref{erk}) corresponds to \cite[eq.(84)]{KP}.}.

Comparing (\ref{G}) and \cite[eq.(6)]{KP},
one can show that these matrix elements are related to 
\cite[eqs.(74), (83)]{KP} by
\begin{align}\label{gok}
K(z,q)^\beta_\alpha = \phi(z)
(\I q^\hf s^{-1})^{n-|\alpha|-|\beta|}
(K(z)_\alpha^{{\bf 1}-\beta} \;\text{in \cite{KP}})
\qquad (K = K^{\mathrm{tr}}, K^{k,k'})
\end{align}
with some scalar $\phi(z)$ that only depends on the normalization.
Examples of $K^{\mathrm{tr}}(z), K^{r,r'}(z)$ on the RHS are available in
\cite[App.C]{KP}. 
Setting $(s, s')=(\I t, -\I q/t)$ so as to satisfy 
(\ref{s12}), one can stay within ``real" coefficients 
$K^{\mathrm{tr}}(z,q)^\beta_\alpha, 
K^{k,k'}(z,q)^\beta_\alpha \in \Q(q,t,z)$.
This feature will also be observed explicitly in the examples
in Section \ref{ss:ex}. 

\begin{remark}\label{re:rk}
Comparing (\ref{L}) and (\ref{G}), we have
\begin{align}
G^j_i\left|_{s=s'= \epsilon q^\hf} \right. = 
L^{j,1-j}_{i,1-i}\left|_{q\rightarrow q^\hf} \right..
\end{align}
From the similar relation between the boundary vectors (\ref{ce}), 
we find a curious fact that the elements of 
the $K$ matrices (\ref{ktr})--(\ref{k22})
are equal to the special elements of the $R$ matrices (\ref{Rtr})--(\ref{r1-}) 
with $q$ replaced by $q^\hf$:
\begin{align}\label{kr}
K(z,q)_\alpha^\beta = \rho(z) 
R^\ast(z,q^\hf)^{\beta, \beta}_{\alpha, \alpha}
\qquad
((R,K) = (R^{\mathrm{tr}},K^{\mathrm{tr}}), (R^{r,r'}, K^{r,r'})),
\end{align}
where $\rho(z)$ is a scalar depending on the normalization only,
and the LHS actually 
means the case $s=s'=\epsilon q^\hf$.
\end{remark}

\subsection{Examples}\label{ss:ex}

The quantities
(\ref{ktr})--(\ref{k22}) are evaluated by the following formulas 
\cite[eq.(81)]{KP}\footnote{Quoted by taking it into account that 
 $\ok$ in \cite{KP} is equal to $q^\hf\ok$ in this paper.}:
\begin{equation}\label{lin}
\begin{split}
&\mathrm{Tr}(z^{\bf h}\ok^m) = \frac{1}{1-zq^m},
\\
&\langle \eta_k|z^{\bf h} (\apm)^j \ok^m 
w^{\bf h}|\eta_{k'}\rangle =
\langle \eta_{k'}|w^{\bf h}\ok^m  (\amp)^j 
z^{\bf h}|\eta_k\rangle\quad (k, k' \in \{1,2\}),
\\
&\langle \eta_1|z^{\bf h}(\ap)^j \ok^m 
w^{\bf h}|\eta_1\rangle
= z^j(-q;q)_j
\frac{(-q^{j+m+1}zw;q)_\infty}{(q^mzw;q)_\infty},
\\
&\langle \eta_1|z^{\bf h}(\am)^j \ok^m 
w^{\bf h}|\eta_2\rangle
= z^{-j}\sum_{i=0}^j(-1)^i q^{\frac{1}{2}i(i+1-2j)}
\binom{j}{i}_{\!\!q}
\frac{(-q^{2i+2m+1}z^2w^2;q^2)_\infty}{(q^{2i+2m}z^2w^2;q^2)_\infty},
\\
&\langle \eta_1|z^{\bf h} (\ap)^j \ok^m 
w^{\bf h}|\eta_2\rangle
= z^{j}\sum_{i=0}^j q^{\frac{1}{2}i(i+1)}
\binom{j}{i}_{\!\!q}
\frac{(-q^{2i+2m+1}z^2w^2;q^2)_\infty}{(q^{2i+2m}z^2w^2;q^2)_\infty},
\\
&\langle \eta_2|z^{\bf h} (\ap)^{j} \ok^m 
w^{\bf h}|\eta_2\rangle
= \theta(j\in 2\Z) \,z^{j}(q^2;q^4)_{\frac{j}{2}}
\frac{(q^{2j+2m+2}z^2w^2;q^4)_\infty}{(q^{2m}z^2w^2;q^4)_\infty},
\end{split}
\end{equation}
where $\theta(\text{true})=1, \theta(\text{false})=0$.

As already mentioned, 
matrix elements have been given explictly in \cite[App.C]{KP}
for $K^{\mathrm{tr}}(z,q)$ up to $n=3$
and  $K^{k,k'}(z,q)$ up to $n=2$, which are 
connected to the present paper by (\ref{gok}).
So we illustrate here the calculation of $K^{1,1}(z,q)$ with $n=3$,
which corresponds to $U_p(\mathfrak{g}^{1,1}) = U_p(D^{(2)}_4)$.

From (\ref{erk}) and (\ref{zG}) with $z^{\bf h}$ replaced by 
the special case $\ok=q^{\bf h}$,
the elements\footnote{For simplicity, 
$K^{1,1}(z,q)^\beta_\alpha$ with 
$\alpha=(\alpha_1, \alpha_2, \alpha_3),
\beta=(\beta_1, \beta_2, \beta_3)$ is denoted by 
$K^{1,1}(z,q)^{\beta_1\beta_2\beta_3}_{\alpha_1\alpha_2\alpha_3}$
rather than 
$K^{1,1}(z,q)^{(\beta_1,\beta_2,\beta_3)}_{(\alpha_1,\alpha_2,\alpha_3)}$.}
are reduced to
$K^{1,1}(z,q)^{000}_{000}=1$ in (\ref{no5}) 
and the following up to powers of $q, s$ and $z$:
\begin{alignat*}{2}
K^{1,1}(z,q)_{100}^{001} 
&= -\frac{s^2(1+q)(1-q+q^2z+q^3z)}{q(qz;q)_2},
&\quad
K^{1,1}(z,q)_{010}^{100}
&=-\frac{s^2(1+q)z(1+q-q^2z+q^3z)}{q(qz;q)_2},
\\
K^{1,1}(z,q)_{001}^{110} &= \frac{\I s^3 (1+q)^2(1+q^2)z^2(1-qz+q^2z)}{q^3(z;q)_3},
&\quad
K^{1,1}(z,q)_{000}^{011} &= -\frac{s^2(1+q)(1+q^2)z^2}{(1-qz)(1-q^2z)},
\\
K^{1,1}(z,q)_{010}^{101} 
&= \frac{\I s^3 (1+q)^2z(1-q+2q^2z-q^3z^2+q^4z^2)}{q^3(z;q)_3},
&\quad
K^{1,1}(z,q)_{000}^{001} &= -\frac{\I s q(1+q)z}{1-q^2z},
\\
K^{1,1}(z,q)_{000}^{111} &= \frac{\I s^3 (-q;q)_3z^3}{q^3(z;q)_3},
&\quad
K^{1,1}(z,q)_{100}^{011} &= \frac{\I s^3 (1+q)^2(1+q^2)z(1-q+q^2z)}{q^3(z;q)_3}.
\end{alignat*}
Let us derive $K^{1,1}(z,q)_{100}^{011}$ in the last line.
From (\ref{kkk}) and $s'=q/s$ it is calculated as
\begin{align*}
&\kappa^{1,1}(z)\langle \eta_1| z^{\bf h}G^0_1G^1_0G^1_0 |\eta_1\rangle
= \kappa^{1,1}(z)\langle \eta_1| z^{\bf h}\am \ap \ap |\eta_1\rangle
= \kappa^{1,1}(z)\langle \eta_1| z^{\bf h}(1-q^2\ok^2) \ap |\eta_1\rangle
\\
&=\kappa^{1,1}(z)\langle \eta_1| z^{\bf h}(\ap-q^4 \ap\ok^2)  |\eta_1\rangle
=\frac{\I s^3 (q^3z;q)_\infty}{q^3(-q^4z;q)_\infty}
\left(\frac{z(1+q)(-q^2z;q)_\infty}{(z;q)_\infty}
-\frac{q^4z(1+q)(-q^4z;q)_\infty}{(q^2z;q)_\infty}\right)
\\
&=\frac{\I s^3z(1+q)}{q^3}
\left(\frac{(1+q^2z)(1+q^3z)}{(z;q)_3}-\frac{q^4}{1-q^2z}\right),
\end{align*}
which leads to the sought result.

\section{Main result}\label{sec:mr}

In the rest of the paper we will always assume that 
$\epsilon, q, p, s, s'$ are related by
\begin{align}\label{qp}
\epsilon= \I qp \in \{1,-1\},
\qquad ss' = q.
\end{align}
This unifies (\ref{s12}) and (\ref{ss12}).

\subsection{Summary of known results}
Let us first recall the known result on the $R$ matrices.
\begin{theorem}\label{th:r}
The $R$ matrices possess the matrix product formulas as follows:
\begin{align}
U_p(\mathfrak{g}^{\mathrm{tr}}):\;
\mathscr{R}(z,p)
& = R^{\mathrm{tr}}(z,q),
\label{bs}\\
U_p(\mathfrak{g}^{r,r'}):\;
\mathscr{R}(z,p)& = R^{r,r'}(z,q).
\label{ks}
\end{align}
\end{theorem}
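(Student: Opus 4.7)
The strategy is to use the fact that $\mathscr{R}(z,p)$ is characterized up to an overall scalar by the intertwining relation (\ref{ir2}), together with the irreducibility of $V_x \otimes V_y$ (and of each irreducible summand in (\ref{rde1})--(\ref{rde2})) as a $\Delta U_p$-module for generic $x,y$. It therefore suffices to verify (i) that the matrix product operators $R^{\mathrm{tr}}(z,q)$ and $R^{r,r'}(z,q)$ satisfy $\Delta^{\mathrm{op}}(g)\, R = R\, \Delta(g)$ for every Chevalley generator $g = e_i, f_i, k_i^{\pm 1}$, and (ii) that their normalizations agree with the ones fixed by (\ref{no1})--(\ref{no3}).

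For the bulk generators $e_i, f_i, k_i^{\pm 1}$ with $0 < i < n$, the representation (\ref{rep}) shows that the action is localized on the $i$-th and $(i{+}1)$-th tensor components of $V_z$. The intertwining identity thus reduces to a local two-site identity for the adjacent $L$-factors $L^{\gamma_i,\delta_i}_{\alpha_i,\beta_i}\, L^{\gamma_{i+1},\delta_{i+1}}_{\alpha_{i+1},\beta_{i+1}}$ inside the matrix product, which is an $RLL$-type relation for the $L$-operator (\ref{L}) with physical auxiliary space. It is verified directly from the $q$-Boson commutations (\ref{qb}). Because the bulk Chevalley relations are identical across all five algebras in question, this part of the calculation is uniform.

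For the boundary generators at nodes $0$ and $n$, the analysis splits by case. In $U_p(\mathfrak{g}^{\mathrm{tr}})$ the node-$0$ action is cyclic and its intertwining is supplied by the cyclicity of the trace in (\ref{Rtr}) together with the insertion $z^{\bf h}$, which via a commutation analogous to (\ref{zG}) delivers the required spectral-parameter dependence when the ``missing'' $L$-factor is carried around through the trace. In the non-type $A$ cases the boundary generators act on one or two end-sites, and the intertwining is provided by the covariance of the boundary vectors $|\chi_r\rangle, \langle\chi_r|$ under $\apm, \ok$, which via (\ref{ce}) is inherited from (\ref{et1}) and (\ref{et2}). Crucially, the indices $r$ and $r'$ in (\ref{Rrr})--(\ref{R22}) are chosen precisely to match the single-site ($r{=}1$: $D^{(2)}_{n+1}, \tilde{B}^{(1)}_n$) versus two-site ($r{=}2$: $B^{(1)}_n, D^{(1)}_n$) character of $e_0$ in (\ref{rep}), and analogously for $e_n$ via $r'$. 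Once the intertwining holds, the normalization factors $\varrho^{\mathrm{tr}}_{m,m'}, \varrho^{r,r'}, \varrho^{2,2}_{\pm,\pm}$ are pinned down by evaluating the matrix product formulas on the distinguished vectors in (\ref{no1})--(\ref{no3}) using the elementary matrix elements of the type collected in (\ref{lin}).

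The main obstacle is the boundary intertwining in the cases $r{=}2$ or $r'{=}2$ ($B^{(1)}_n$ and $D^{(1)}_n$), where $e_0$ or $e_n$ changes two adjacent components simultaneously and a genuine two-site $L$-operator rearrangement must be combined with the $q \to q^2$ image of the non-trivial relation (\ref{et2}) for $|\chi_2\rangle$. Once this and its counterpart at the opposite end are verified, the remaining cases follow by the same template, and Theorem \ref{th:r} is obtained essentially as a reformulation of the constructions in \cite{BS, KS}.
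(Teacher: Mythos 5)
The paper does not actually prove Theorem \ref{th:r}: it is quoted as a known result, with the statement that (\ref{bs}) and (\ref{ks}) are essentially due to \cite{BS} and \cite{KS}. Your outline — reducing the intertwining relation (\ref{ir2}) to local two-site identities for the $L$-operator (\ref{L}) in the bulk and to the boundary-vector properties (\ref{et1})--(\ref{et2}) (via (\ref{ce})) at the end nodes, then invoking irreducibility and the normalizations (\ref{no1})--(\ref{no3}) — is a correct reconstruction of the strategy of those references, and it parallels the paper's own proof of the analogous statement for the $K$ matrices in Proposition \ref{pr:kpai}.
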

The results (\ref{bs}) and (\ref{ks}) are essentially due to
\cite{BS} and \cite{KS}, respectively.

From (\ref{rmm}) and (\ref{r1-}), Theorem \ref{th:r} implies
\begin{align} 
U_p(\mathfrak{g}^{\mathrm{tr}}):\;
\mathscr{R}^\ast(z,p)
&= R^{\mathrm{tr}\ast}(z,q),
\quad\;\;
\mathscr{R}^{\ast\ast}(z,p)
= R^{\mathrm{tr}\ast\ast}(z,q),
\\
U_p(\mathfrak{g}^{r,r'}):\;
\mathscr{R}^\ast(z,p)&= R^{r,r'\ast}(z,q),
\quad
\mathscr{R}^{\ast\ast}(z,p)
= R^{r,r'\ast\ast}(z,q).
\end{align}

Next we summarize the results on the $K$ matrices in \cite{KP}.
For $U_p(\mathfrak{g}^{\mathrm{tr}})$,
the reflection equation involving 
$R^{\mathrm{tr}}(z,q), K^{\mathrm{tr}}(z,q)$ was proved in \cite[eq.(73)]{KP}.
For $U_p(\mathfrak{g}^{r,r'})$,
the reflection equation involving 
$R^{r,r'}(z,q)$ and $K^{k,k'}(z,q)$
with $r\le k, r'\le k'$ was shown to follow from 
the property \cite[eq.(78)]{KP} of the boundary vectors. 
Proving it is an interesting open problem.
In this paper we will achieve a proof of 
the reflection equation for $U_p(\mathfrak{g}^{r,r'})$ 
via a different route of resorting to the 
representation theory of the coideal subalgebras of $U_p$.

\subsection{{\mathversion{bold}$K$} matrices}
Now we state the main results of the paper.
They characterize the $K$ matrices \cite{KP} by the matrix product construction 
(recalled in Section \ref{ss:gk} with a slight parametric generalization) 
as the intertwiner of the coideal subalgebras
described in Section \ref{ss:coi}--\ref{ss:cok}.

\begin{theorem}\label{th:A}
The $U_p(\mathfrak{g}^{\mathrm{tr}})$ modules $V^m_x$ and
$V^m_x \otimes V^{m'}_y$ with generic $x,y$ are 
irreducible as $\mathcal{B}^{\mathrm{tr}}$ modules for any $0\le m, m' \le n$.
There is a $K$ matrix having the decomposition
\begin{align}\label{cdk}
\mathscr{K}(z,p) = \mathscr{K}^0(z,p) \oplus \cdots \oplus \mathscr{K}^n(z,p),
\quad
\mathscr{K}^m(z,p): V^m_z \rightarrow V^{m \ast}_{z^{-1}}
\end{align}
such that each component is characterized up to normalization 
by a finer version of the 
intertwining relation (\ref{ir1}): 
\begin{align}
\mathscr{K}^m(z,p) \pi^m_z(b) &= \pi^{m\ast}_{z^{-1}}(b) \mathscr{K}^m(z,p)
\quad (b \in \mathcal{B}^{\mathrm{tr}}).
\end{align}
It is given by the matrix product formula
\begin{align}\label{ttk}
\mathscr{K}(z,p) = K^{\mathrm{tr}}(z,q).
\end{align}
\end{theorem}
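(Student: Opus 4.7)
The plan is to characterize each $\mathscr{K}^m(z,p)$ via Schur-type uniqueness and then verify that the matrix product operator $K^{\mathrm{tr}}(z,q)$ satisfies the intertwining relation through local identities for the operator $G$. To that end I would first establish the irreducibility of $V_x^m$ and $V_x^m \otimes V_y^{m'}$ as $\mathcal{B}^{\mathrm{tr}}$-modules. Each generator $b_i = f_i + p^{2} k_i^{-1} e_i + \frac{\I \epsilon p}{1-p^2} k_i^{-1}$ acts locally on positions $i, i+1$ of $V_x$ (indices taken in $\Z_n$), and its three summands shift the weight of a basis vector $v_\alpha$ by ${\bf e}_i-{\bf e}_{i+1}$, $-{\bf e}_i+{\bf e}_{i+1}$, and $0$ respectively. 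Decomposing a cyclic vector into weight components and iterating $b_i$ separates these three pieces, reducing the transitivity question to that of the ordinary $U_p$-action on the $m$-th fundamental representation, which is well known to be irreducible. For $V_x^m \otimes V_y^{m'}$ the coideal action $\Delta b_i = k_i^{-1} \otimes b_i + (f_i + p^{2} k_i^{-1} e_i) \otimes 1$ acquires $x, y$ dependence through $\pi_x$ and $\pi_y$; any proper $\Delta\mathcal{B}^{\mathrm{tr}}$-invariant subspace would force polynomial relations in $x/y$ that cannot hold for generic values.

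Given irreducibility, Schur's lemma forces any intertwiner $\mathscr{K}^m(z,p)\colon V_z^m \to V_{z^{-1}}^{m\ast}$ satisfying the intertwining relation to be unique up to a scalar. The decomposition (\ref{cdk}) then follows from the grading by $|\alpha|$, which every $b_i$ preserves on $V_z$ since each of its summands $f_i$, $k_i^{-1} e_i$, $k_i^{-1}$ commutes with $|\alpha|$.

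To establish (\ref{ttk}), I would verify that $K^{\mathrm{tr}}(z,q)$ given by (\ref{ktr}) satisfies $K^{\mathrm{tr}}(z,q) \pi_z(b_i) = \pi^\ast_{z^{-1}}(b_i) K^{\mathrm{tr}}(z,q)$ for every $i \in \Z_n$. Since $\pi_z(b_i)$ acts non-trivially only on positions $i, i+1$, this reduces to a local identity for the two-site product $G^{\beta_i}_{\alpha_i} G^{\beta_{i+1}}_{\alpha_{i+1}}$ inside the trace, derivable from the $q$-Boson relations (\ref{qb}) together with the constraints $\epsilon = \I q p$ and $ss' = q$ of (\ref{qp}). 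Indeed, the coefficient $\frac{\I \epsilon p}{1-p^2}$ of $k_i^{-1}$ in $b_i$ is precisely what is needed for the constant term produced by commuting $\apm$ past $\ok$ within $G$ to match.

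The main obstacle will be the $b_0$ intertwining relation, whose two active sites are $n$ and $1$: here the cyclicity of the trace in (\ref{ktr}) must be combined with the commutation (\ref{zG}), $z^{\h} G^b_a = z^{b-a} G^b_a z^{\h}$, to reproduce the spectral reversal $z \mapsto z^{-1}$ on the right-hand side. A secondary difficulty lies in the tensor-product irreducibility for arbitrary $m, m'$, since the Onsager coideal is strictly smaller than $U_p$ and standard quantum-group irreducibility results do not apply directly; one must construct explicit cyclic vectors and exploit genericity of $x/y$ to exclude proper invariant subspaces.
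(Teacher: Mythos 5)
Your treatment of the existence half --- verifying that $K^{\mathrm{tr}}(z,q)$ intertwines each $\pi_z(b_i)$ with $\pi^\ast_{z^{-1}}(b_i)$ by reducing to a local quadratic identity for $G^{\beta_i}_{\alpha_i}G^{\beta_{i+1}}_{\alpha_{i+1}}$ via the $q$-Boson relations, with the $z^{\pm 1}$ factors at $i=0$ absorbed by cyclicity of the trace and $z^{\h}G^b_a = z^{b-a}G^b_a z^{\h}$ --- is exactly the paper's route (Proposition \ref{pr:kpai}, case (i), identity (\ref{gg})), and the observation that each $b_i$ preserves $|\alpha|$ correctly yields the block decomposition (\ref{cdk}).

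The irreducibility half, however, has a genuine gap, and it is the load-bearing part of the theorem. Your claim that iterating $b_i$ ``separates'' its three weight-shifting summands, thereby reducing $\mathcal{B}^{\mathrm{tr}}$-irreducibility to $U_p$-irreducibility of $V^m_x$, cannot be made to work as stated: $\mathcal{B}^{\mathrm{tr}}$ contains no Cartan elements, so a $\mathcal{B}^{\mathrm{tr}}$-invariant subspace need not be spanned by weight vectors, and there is no operator in the coideal with which to project onto weight components. If one \emph{could} separate the summands at the operator level, every $\mathcal{B}^{\mathrm{tr}}$-intertwiner would be a $U_p$-intertwiner --- but $\mathscr{K}^m(z,p):V^m_z\to V^{m\ast}_{z^{-1}}$ is precisely not one for generic $z$, so the separation heuristic proves too much. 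Likewise, ``a proper invariant subspace would force polynomial relations in $x/y$ that cannot hold generically'' is a restatement of the goal, not an argument. The paper closes this gap by a degeneration: one works over the local ring $A$ of functions regular at $x=0$, sets $y=x$, replaces $b_0$ by $\tilde{b}_0=x_0 b_0$ so that $\tilde{b}_0\equiv x_0 f_0$ modulo $x$, and exploits the resulting triangularity with respect to the $U_{p,0}$-decomposition $V_x\otimes V'_x\simeq\bigoplus_j U_j$ (Lemmas \ref{lem:1} and the following lemma) to drive any nonzero vector of an invariant lattice $L'$ to the highest weight vector of $U_0$ modulo $xL$; Nakayama's lemma then gives $L'=L$ and hence irreducibility at generic $x,y$. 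Some argument of this kind --- or another mechanism that genuinely handles the absence of Cartan generators in the coideal --- is needed before Schur's lemma can be invoked for the uniqueness of $\mathscr{K}^m$ and for the reflection equation.
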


\begin{theorem}\label{th:BD2}
For $(r,r')\neq (2,2)$, 
the $U_p(\mathfrak{g}^{r,r'})$ modules $V_x$ and
$V_x \otimes V_y$ with generic $x,y$ are 
irreducible as $\mathcal{B}^{r,r'}_{k,k'}$ modules 
if $r\le k$ and $r' \le k'$.
The $K$ matrix characterized up to normalization by  
the intertwining relation (\ref{ir1}) 
is given by the matrix product formula:
\begin{align}
\mathscr{K}(z,p) = K^{k,k'}(z,q).
\end{align}
\end{theorem}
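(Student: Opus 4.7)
My plan is to prove Theorem \ref{th:BD2} in four parts, following the overall pattern used for Theorem \ref{th:A} but paying close attention to the boundary generators whose form depends on $(r,k)$ and $(r',k')$.

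The first part is to verify the intertwining relation (\ref{ir1}) for $K^{k,k'}(z,q)$ defined by (\ref{kkk}), one generator at a time. For the interior generators $b_i$ with $0<i<n$ in (\ref{bi}), the form is identical to the type $A$ one, and the relation reduces to a local identity on two adjacent $G$ operators that follows from (\ref{zG}) and the $q$-Boson relations (\ref{qb}). For the boundary generators $b_0, b_n$ in (\ref{b0}), (\ref{bn}), $\pi_z(b_0)$ acts only on the leftmost site (or leftmost two sites when $r=2$) and carries a spectral factor $z^{\pm r}$. In the matrix product formula (\ref{kkk}) this factor is converted into a $q$-Boson action by moving $z^{\bf h}$ across $G^{\beta_1}_{\alpha_1}$ via (\ref{zG}) and then absorbing the result into the boundary vector $|\eta_k\rangle$. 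The defining identities (\ref{et1}) for $k=1$ and (\ref{et2}) for $k=2$ are engineered precisely to produce the combination $f_0 + (s'/s)^r p^r k_0^{-1} e_0 + d_{r,k}(s,s') k_0^{-1}$ appearing in (\ref{b0}); the relations $ss'=q$ and $\epsilon=\I qp$ in (\ref{qp}) are exactly what makes the coefficients match. The same argument, mirrored on the right end, handles $b_n$.

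The second part is irreducibility of $V_x$ as a $\mathcal{B}^{r,r'}_{k,k'}$ module. Let $W \subset V_x$ be any nonzero invariant subspace. The interior generators $b_i\,(0<i<n)$ act exactly as in type $A$ and, upon iteration, permute the $1$-entries of $\alpha\in\mathrm{sp}$ among positions $\{1,\dots,n\}$, while the $k_i^{-1}e_i$ and $k_i^{-1}$ terms produce further basis elements. The boundary generators $b_0, b_n$ contain the terms $f_0, k_0^{-1}e_0, f_n, k_n^{-1}e_n$ that alter the occupancies $\alpha_1$ (or $\alpha_1,\alpha_2$) and $\alpha_n$ (or $\alpha_{n-1},\alpha_n$), and hence change $|\alpha|$. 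Combining these moves one reaches, from any nonzero element of $W$, every $v_\alpha$ with $\alpha \in \mathrm{sp}$, proving $W = V_x$.

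The third part, irreducibility of $V_x \otimes V_y$ for generic $x,y$, is the main obstacle. By (\ref{Del}), $\Delta b_i = k_i^{-1}\otimes b_i + (f_i+\beta_i k_i^{-1}e_i)\otimes 1$ with $\beta_0=(s'/s)^r p^r$, $\beta_n=(s/s')^{r'} p^{r'}$, $\beta_i=p^2$ otherwise. The interior action preserves the grading by $|\alpha|+|\beta|$ up to the $k^{-1}e$ and constant contributions, while $\Delta b_0, \Delta b_n$ inject explicit Laurent monomials in $x$ and $y$ via the $f_0,e_0,f_n,e_n$ pieces. I would show that for generic $x,y$ the coefficients of the resulting spectral polynomials are nonzero, ruling out accidental invariant subspaces, and then combine with the interior action to reach every basis element; the technical difficulty lies in tracking these polynomial conditions systematically and excluding their common zero locus. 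Once this is achieved, Schur's lemma applied to (\ref{ir1}) — using that $V^\ast_{z^{-1}}$ is likewise irreducible through the $\ast$-duality of Section \ref{ss:sd} — forces $\mathscr{K}(z,p)$ to be unique up to scalar, hence equal to $K^{k,k'}(z,q)$ by the first part. The reflection equation (\ref{re1}) in the non-type-$A$ case then follows from the commutative diagram (\ref{zu}), delivering the first proof in these cases as promised in the introduction.
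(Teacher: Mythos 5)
Your first part (the intertwining relation) matches the paper's Proposition \ref{pr:kpai}: reduce each generator to a local identity on the $G$'s and the boundary vectors via (\ref{zG}), (\ref{qb}), (\ref{et1}), (\ref{et2}). That part is sound. The genuine gap is in your second and third parts, where the irreducibility claims are asserted rather than proved. For $V_x$ alone, an invariant subspace $W$ need not contain any basis vector $v_\alpha$; it contains some linear combination, and since each $b_i$ is a sum of three terms ($f_i$, $k_i^{-1}e_i$, $k_i^{-1}$), applying words in the $b_i$ to a general linear combination does not obviously isolate a single $v_\alpha$. Saying that the moves ``reach every $v_\alpha$'' skips exactly the step that needs an argument. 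For $V_x\otimes V_y$ you explicitly concede the point: ``the technical difficulty lies in tracking these polynomial conditions systematically and excluding their common zero locus'' is a description of the problem, not a mechanism for solving it. There is no organizing principle in your sketch that parametrizes the possible invariant subspaces by finitely many polynomial conditions in $x,y$, so the proposal does not establish the theorem.

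For comparison, the paper resolves this by a degeneration argument rather than a genericity-of-coefficients argument. It suffices to prove irreducibility at the special point $y=x$; one then works over the local ring $A$ of rational functions regular at $x=0$, with the lattice $L=\bigoplus A\,v_\alpha\otimes v_\beta$ and $L'=W\cap L$. Modulo $x$ the rescaled generator $\tilde b_0=x_0^{\,}b_0$ degenerates to $x_0f_0$, so the coideal action reduces essentially to the lowering operators. Decomposing $V_x\otimes V_x$ into irreducibles $U_0\oplus\cdots\oplus U_l$ over the subalgebra $U_{p,0}$ generated by $e_i,f_i,k_i$ with $i\neq 0$, ordered by highest weight, one shows (Lemmas \ref{lem:1} and its successor) that any nonzero vector of $L'$ can be driven, mod $x$, first into a lower constituent and eventually to the highest weight vector of $U_0$, which then generates all of $L$ mod $xL$; Nakayama's lemma lifts $L'=L$ back to $W=V_x\otimes V_x$. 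This weight-filtration-plus-Nakayama scheme is the missing idea in your proposal, and without it (or a genuine substitute) the uniqueness of $\mathscr{K}$ and the reflection equation in the non type $A$ cases do not follow.
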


\begin{theorem}\label{th:D}
The $U_p(\mathfrak{g}^{2,2})$ modules $V^\pm_x$, 
$V^\pm_x \otimes V^\pm_y$ and $V^\pm_x \otimes V^\mp_y$ with generic $x,y$ are 
irreducible as $\mathcal{B}^{2,2}_{2,2}$ modules.
There is a $K$ matrix having the decomposition
\begin{align}
\mathscr{K}(z,p) = \mathscr{K}^+(z,p) \oplus \mathscr{K}^-(z,p),
\quad
\mathscr{K}^\pm(z,p): V^\pm_z \rightarrow V^{\pm \ast}_{z^{-1}}
\end{align}
such that each component is characterized  up to normalization 
by a finer version of the intertwining relation (\ref{ir1}):
\begin{align}
\mathscr{K}^\pm(z,p) \pi^\pm_z(b) &= \pi^{\pm\ast}_{z^{-1}}(b) \mathscr{K}^\pm(z,p)
\quad (b \in \mathcal{B}^{2,2}_{2,2}).
\end{align}
It is given by the matrix product formula
\begin{align}
\mathscr{K}(z,p) = K^{2,2}(z,q).
\end{align}
\end{theorem}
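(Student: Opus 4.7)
The plan is to adapt the strategy used for Theorems \ref{th:A} and \ref{th:BD2}, keeping careful track of the extra parity decomposition $V_z = V^+_z \oplus V^-_z$ that is peculiar to $U_p(\mathfrak{g}^{2,2}) = U_p(D^{(1)}_n)$. First I would observe that every generator $b_i$ of $\mathcal{B}^{2,2}_{2,2}$ preserves the parity $|\alpha| \bmod 2$: for $0 < i < n$ the constituents $f_i, e_i$ merely permute adjacent components of $\alpha$, while $f_0, e_0$ (resp.\ $f_n, e_n$) shift $|\alpha|$ by $\pm 2$. Since $\Delta b_i = k_i^{-1} \otimes b_i + (f_i + \beta_i k_i^{-1} e_i) \otimes 1$ involves only these parity-preserving pieces on the left tensor factor, the coideal acts through parity-preserving operators on $V_x \otimes V_y$ as well. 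Consequently the intertwining relation (\ref{ir1}) splits into two independent relations on the $\pm$ summands, yielding the decomposition $\mathscr{K}(z,p) = \mathscr{K}^+(z,p) \oplus \mathscr{K}^-(z,p)$ claimed in the theorem.

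For the irreducibility of $V^\pm_x$ as a $\mathcal{B}^{2,2}_{2,2}$-module I would use a cyclic vector argument: starting from $v_0$ in $V^+_x$ or from $v_{{\bf e}_1}$ in $V^-_x$, the bulk generators $b_i$ ($0 < i < n$) transport a $1$-entry to any adjacent slot while $b_0$ and $b_n$ create or destroy a pair of $1$'s at positions $\{1,2\}$ and $\{n-1,n\}$ respectively, so every basis vector of the correct parity is reachable. For the tensor products $V^\sigma_x \otimes V^{\sigma'}_y$ I would argue by genericity in $(x,y)$: any $\mathcal{B}$-submodule would have to depend rationally on the spectral parameters, and inspecting the action of $\Delta b_i$ on pure tensors at asymptotic limits in $x$ or $y$ (where the action decouples into one-factor operators modulo lower-order terms) should force such a submodule to be trivial for generic $x,y$. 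Schur's lemma then guarantees uniqueness of each $\mathscr{K}^\pm(z,p)$ up to scalar, reducing the theorem to exhibiting a nonzero intertwiner.

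The remaining task is to verify that $K^{2,2}(z,q)$ of (\ref{k22}) satisfies the intertwining relation. Site by site, the relation for the bulk generators $b_i$ ($0 < i < n$) reduces to a local identity on two adjacent factors $G^{\beta_i}_{\alpha_i} G^{\beta_{i+1}}_{\alpha_{i+1}}$, and this follows from the $q$-boson relations (\ref{qb}) together with the commutation rule (\ref{zG}) exactly as in the type A case of Theorem \ref{th:A}. The boundary generators $b_0$ and $b_n$ require contracting the matrix product against $\langle \eta_2 |$ and $|\eta_2\rangle$, where the identity $\apm|\eta_2\rangle = \amp|\eta_2\rangle$ of (\ref{et2}) must be balanced against the explicit two-site operators $e_0, e_n$ and the coefficient $d_{2,2}$ from (\ref{drk}). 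I expect the main obstacle to lie here: one must confirm that the specific choice $k = k' = 2$ in $\mathcal{B}^{2,2}_{2,2}$ is precisely what allows the boundary vectors $|\eta_2\rangle, \langle \eta_2|$ (rather than $|\eta_1\rangle$) to absorb the two-site boundary operators, and that the prefactors $\kappa^{2,2}_\pm(z)$ correctly convert the spectral parameter $z$ on $V^\pm_z$ into $z^{-1}$ on $V^{\pm\ast}_{z^{-1}}$. Once these local boundary identities are matched, combining them with the bulk identities via the telescoping structure of the matrix product completes the proof.
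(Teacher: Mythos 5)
Your outline of the existence half is essentially the paper's: the parity preservation giving $\mathscr{K}=\mathscr{K}^+\oplus\mathscr{K}^-$, and the site-by-site verification that $K^{2,2}(z,q)$ intertwines — reducing the bulk generators to a two-site identity in the $G$'s via (\ref{qb}) and (\ref{zG}), and the boundary generators $b_0,b_n$ to identities against $\langle\eta_2|,|\eta_2\rangle$ using (\ref{et2}) — is exactly Proposition \ref{pr:kpai}, cases (i), (iv) and (vii). That part is sound.

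The genuine gap is in the irreducibility claims, which carry all the weight of the uniqueness statement and of the reflection equation. For $V^\pm_x$ your ``cyclic vector argument'' only shows that $v_0$ (resp.\ $v_{{\bf e}_1}$) generates the whole module; irreducibility requires that \emph{every} nonzero vector does, and since $\mathcal{B}^{2,2}_{2,2}$ does not contain the $k_i^{\pm1}$ separately, a $\mathcal{B}$-submodule need not be spanned by weight vectors, so you cannot reduce to basis vectors. Moreover each $b_i$ sends $v_\alpha$ to a three-term combination of $v_{\alpha\pm({\bf e}_i-{\bf e}_{i+1})}$ and $v_\alpha$, not to a single basis vector, so even the cyclicity of $v_0$ needs a triangularity argument. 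For the tensor products, the assertion that an asymptotic limit in $x,y$ ``should force such a submodule to be trivial'' is precisely the point that has to be proved, and it is where the paper does real work: it specializes to $y=x$, introduces the lattice $L=\bigoplus A\,v_\alpha\ot v_\beta$ over the local ring $A$ of functions regular at $x=0$, rescales $\tilde b_i=x_0^{\delta_{i0}}b_i$ so that $\tilde b_0\equiv x_0f_0$ mod $xL$, orders the $U_{p,0}$-components $U_j$ of the decomposition \eqref{decomp} by highest weight, and proves two lemmas showing that an arbitrary nonzero $u\in W\cap L$ can be driven, mod $xL$, first into $U_0$ and then onto a highest weight vector of $U_0$; Nakayama's lemma then gives $W\cap L=L$. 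Without some substitute for these steps (in particular for the passage from an arbitrary vector to a generating one), your proposal does not establish irreducibility, hence neither the uniqueness of $\mathscr{K}^\pm$ nor the reflection equation for $K^{2,2}$.
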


\subsection{Reflection equation}
Write ${R}(z,q), {R}^\ast(z,q), 
{R}^{\ast\ast}(z,q), {K}(z,q)$ 
simply as ${R}(z), {R}^\ast(z), 
{R}^{\ast\ast}(z), {K}(z)$,
respectively.
From the argument that led to (\ref{re1}) 
and Theorem \ref{th:A}--\ref{th:D}, we have
\begin{corollary}
The following reflection equation holds:
\begin{align}
K^{\mathrm{tr}}_2(y)R^{\mathrm{tr}\ast}_{21}(xy)
K^{\mathrm{tr}}_1(x)R^{\mathrm{tr}}_{12}(x/y)
&=
R^{\mathrm{tr}\ast\ast }_{21}(x/y)
K^{\mathrm{tr}}_1(x)R^{\mathrm{tr}\ast }_{12}(xy)K^{\mathrm{tr}}_2(y),
\label{re2}\\
K^{k,k'}_2(y)R^{r,r'\ast}_{21}(xy)K^{k,k'}_1(x)R^{r,r'}_{12}(x/y)
&= 
R^{r,r'\ast\ast }_{21}(x/y)K^{k,k'}_1(x)R^{r,r'\ast }_{12}(xy)K^{k,k'}_2(y),
\label{re3}
\end{align}
where $r\le k$ and $r'\le k'$ in the latter. 
The equality (\ref{re2}) splits into the one in 
$\mathrm{Hom}(V^m_x\otimes V^{m'}_y, V^{m\ast}_{x^{-1}}
\otimes V^{m'\ast}_{y^{-1}})$
for each pair $(m,m') \in [0,n]^2$.
The equality (\ref{re3}) with $(r,r')=(2,2)$ splits into the one in 
$\mathrm{Hom}(V^\sigma_x\otimes V^{\sigma'}_y, 
V^{\sigma\ast}_{x^{-1}}\otimes V^{\sigma'\ast}_{y^{-1}})$
for each pair $(\sigma, \sigma') \in \{+,-\}^2$.
\end{corollary}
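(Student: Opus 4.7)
The plan is to deduce the reflection equations (\ref{re2}) and (\ref{re3}) as a direct consequence of the linearization scheme recalled around (\ref{zu})--(\ref{re1}), combined with the intertwining identities that Theorems \ref{th:r} and \ref{th:A}--\ref{th:D} supply.

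First I would observe that by Theorem \ref{th:r} every $R$ matrix appearing in (\ref{re2}) and (\ref{re3}) satisfies the $\Delta U_p$-intertwining relations (\ref{ir2})--(\ref{ir4}), while by Theorems \ref{th:A}--\ref{th:D} every $K$ matrix satisfies (\ref{ir1}) for the relevant coideal $\mathcal{B}^{\mathrm{tr}}$ or $\mathcal{B}^{r,r'}_{k,k'}$. Because these subalgebras are left coideals, $\Delta\mathcal{B} \subset U_p \otimes \mathcal{B}$, so a one-sided operator of the form $1 \otimes \mathscr{K}(z,p)$ on a two-fold tensor product is already an intertwiner for the coideal action (the first factor only needs $U_p$-equivariance, which $R$ supplies). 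Chasing the hexagon (\ref{zu}) along either route, the two composites forming the left and right sides of (\ref{re2}) and (\ref{re3}) are therefore both $\mathcal{B}$-module homomorphisms $V_x \otimes V_y \to V^\ast_{x^{-1}} \otimes V^\ast_{y^{-1}}$.

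Next I would invoke the irreducibility part of Theorems \ref{th:A}, \ref{th:BD2}, \ref{th:D}: for generic $x,y$ the module $V_x \otimes V_y$ — or in the type $A$ and $\mathfrak{g}^{2,2}$ cases each summand $V^m_x \otimes V^{m'}_y$, respectively $V^\sigma_x \otimes V^{\sigma'}_y$ — is an irreducible $\mathcal{B}$-module. Since for generic parameters the constituent $R$ and $K$ matrices are invertible, neither composite vanishes, and Schur's lemma forces them to be proportional by some scalar $c(x,y)$. To pin $c(x,y)=1$ I would evaluate both sides on a distinguished vector where the normalizations (\ref{no1})--(\ref{no6}) collapse most factors to the identity: for instance in type $A$ the vector $v_{{\bf e}_1+\cdots+{\bf e}_m}\otimes v_{{\bf e}_1+\cdots+{\bf e}_{m'}}$ is fixed by $\mathscr{R}_{12}$ via (\ref{no1}) and the leading coefficients of the remaining $K$ and $R^\ast$, $R^{\ast\ast}$ factors are fixed by (\ref{no4}); the spin cases are handled identically using (\ref{no2})--(\ref{no3}) together with (\ref{no5})--(\ref{no6}). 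The subspace-by-subspace refinement in the last two sentences of the corollary is then automatic from the simultaneous block decompositions (\ref{rde1})--(\ref{rde2}) of the $R$ matrices and (\ref{cdk}) of $\mathscr{K}$ (with its $\mathfrak{g}^{2,2}$ analogue in Theorem \ref{th:D}), since each side of the reflection equation preserves the relevant grading.

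The main obstacle is not located in this corollary at all: the serious content — verification of the coideal intertwining relation (\ref{ir1}) for the explicit matrix product $K$ matrices and, more importantly, the irreducibility of $V_x \otimes V_y$ as a $\mathcal{B}$-module — is discharged in the proofs of Theorems \ref{th:A}--\ref{th:D} in Section \ref{sec:p}. Granted those, the present statement reduces to a one-line application of Schur's lemma plus a routine normalization check on a reference vector, which is what this proposal amounts to.
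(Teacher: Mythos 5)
Your proposal is correct and follows essentially the same route as the paper: the authors deduce the corollary directly from the linearization argument surrounding the diagram (\ref{zu}) (both composites intertwine the coideal action, irreducibility from Theorems \ref{th:A}--\ref{th:D} plus Schur's lemma give proportionality, and the normalizations (\ref{no1})--(\ref{no6}) fix the scalar), with the block refinement following from the decompositions (\ref{rde1})--(\ref{rde2}) and (\ref{cdk}). Your write-up merely makes explicit the steps the paper compresses into ``From the argument that led to (\ref{re1}) and Theorem \ref{th:A}--\ref{th:D}.''
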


The result (\ref{re2}) reconfirms \cite[eq.(73)]{KP}.
The result (\ref{re3}) yields the first proof of \cite[eq.(82)]{KP}.
As for the special components $\mathscr{K}^1(z,q)$ and 
$\mathscr{K}^{n-1}(z,q)$ in (\ref{cdk}),
an explicit list of the matrix elements was given in \cite{G}.  

\section{Proof}\label{sec:p}

\subsection{Existence}
Let us prove the existence of the intertwiner of the coideal 
$\mathscr{K}(z,p)$ in (\ref{ir1}).
Our strategy is to show that 
the matrix product constructed $K$ matrices in Section \ref{ss:gk}
indeed fulfill the intertwining relation.
We assume the relation (\ref{qp}) among the parameters throughout.

\begin{proposition}\label{pr:kpai}
The $K$ matrices given 
by the matrix product construction (\ref{ktr})--(\ref{k22})
satisfy the intertwining relation 
\begin{align}
K(z,q) \pi_z(b) &= \pi^\ast_{z^{-1}}(b) K(z,q)
\qquad (b \in \mathcal{B}),
\label{kpi}
\end{align}
where 
$(K,\mathcal{B}) = (K^{\mathrm{tr}}, \mathcal{B}^{\mathrm{tr}})$
for $U_p(\mathfrak{g}^{\mathrm{tr}})$ and 
$(K,\mathcal{B}) = (K^{k,k'}, \mathcal{B}^{r,r'}_{k,k'})$
for $U_p(\mathfrak{g}^{r,r'})$ with $r\le k$ and $r' \le k'$.
\end{proposition}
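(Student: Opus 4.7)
The plan is to reduce the intertwining relation (\ref{kpi}) to a finite collection of local operator identities among the $q$-Bosons, which then follow from (\ref{qb}) together with the characterizing properties (\ref{et1})--(\ref{et2}) of the boundary vectors. In matrix element form, (\ref{kpi}) reads $\sum_\gamma K(z,q)^\beta_\gamma \pi_z(b)^\gamma_\alpha = \sum_\gamma \pi^\ast_{z^{-1}}(b)^\beta_\gamma K(z,q)^\gamma_\alpha$ for each generator $b \in \{b_0,\ldots,b_{n'}\}$. Because $\pi_z(b_i)$ (and $\pi^\ast_{z^{-1}}(b_i)$ via $v^\ast_\alpha = v_{\mathbf{1}-\alpha}$) is supported on at most two consecutive sites, the sums collapse onto one or two coordinates and the untouched factors of the matrix product cancel, leaving an operator-valued identity in $\mathrm{End}(F_q)$ that involves only one or two adjacent $G$-factors.

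First I would dispose of the interior generators $b_i$ with $0<i<n$, whose form (\ref{bi}) is uniform across all cases. Both $\pi_z(b_i)$ and $\pi^\ast_{z^{-1}}(b_i)$ are $z$-independent and supported on sites $(i,i+1)$; after using (\ref{zG}) to commute $z^{\h}$ past $G_i G_{i+1}$, the intertwining reduces to the $4\times 4$ operator identity
\begin{equation*}
\sum_{\gamma,\gamma'} G^{\beta_i}_{\gamma} G^{\beta_{i+1}}_{\gamma'}\,\pi_z(b_i)^{\gamma\gamma'}_{\alpha_i\alpha_{i+1}} = \sum_{\gamma,\gamma'}\pi^\ast_{z^{-1}}(b_i)^{\beta_i\beta_{i+1}}_{\gamma\gamma'} G^{\gamma}_{\alpha_i} G^{\gamma'}_{\alpha_{i+1}}.
\end{equation*}
This is a short list of quadratic identities in $\am, \ap, \ok$ that follow from (\ref{qb}); the coefficients $p^2=q$ of $k^{-1}_i e_i$ and $\I\epsilon p/(1-p^2)$ of $k^{-1}_i$ in (\ref{bi}) are exactly what make these identities close. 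In the $U_p(\mathfrak{g}^{\mathrm{tr}})$ case, $b_0$ has the same interior form with cyclic sites $n$ and $1$, and the corresponding identity is absorbed by cyclic invariance of the trace in (\ref{ktr}).

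The substantive step is the boundary generators $b_0$ and $b_n$ for $U_p(\mathfrak{g}^{r,r'})$. Treating $b_0$, with $b_n$ handled analogously via the symmetry (\ref{erk}): for $r=1$, $\pi_z(b_0)$ touches only site $1$ and is $z$-dependent through $e_0, f_0, k_0$ of (\ref{rep}); for $r=2$, it acts on sites $1,2$. After shifting $z^{\h}$ leftward with (\ref{zG}), the intertwining assumes the schematic shape $\langle\eta_k|\,\mathcal{L}_{b_0}\,G^{\beta_1}_{\alpha_1}(G^{\beta_2}_{\alpha_2}) = \langle\eta_k|\,G^{\beta_1}_{\alpha_1}(G^{\beta_2}_{\alpha_2})\,\mathcal{L}'_{b_0}$, where $\mathcal{L}_{b_0}, \mathcal{L}'_{b_0}$ are explicit $q$-Boson polynomials encoding $b_0$ at the left boundary. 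Applying (\ref{et1}) for $k=1$ or (\ref{et2}) for $k=2$ converts the action of $\mathcal{L}_{b_0}$ on $\langle\eta_k|$ into scalar eigenvalue factors, after which the identity collapses to a one- or two-site computation. The coefficient $(s'p/s)^r$ of $k_0^{-1}e_0$ and the shift $d_{r,k}(s,s')$ in (\ref{drk}) are arranged precisely so that the three terms of $b_0$ regroup into the eigenvector form required by (\ref{et1})--(\ref{et2}); the constraint $ss'=q$ from (\ref{qp}) together with $p_0 = p^r$ supplies the matching scalar ratios.

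The main obstacle will be the boundary bookkeeping for the nine admissible coideal pairs $(r,r';k,k')$ with $r\le k$ and $r'\le k'$, each of which requires the parameters $s, s'$, the shift $d_{r,k}(s,s')$, and the normalization $\kappa^{k,k'}(z)$ of (\ref{kak}) to combine compatibly with (\ref{et1})--(\ref{et2}); the piecewise definition of $d_{r,k}(u,v)$ in (\ref{drk}) is exactly what makes the four sub-choices $(k,k') \in \{(1,1),(1,2),(2,1),(2,2)\}$ admitted by each $(r,r')$ all work simultaneously. Once these local identities are verified case by case, the matrix product intertwining (\ref{kpi}) holds generator by generator, establishing Proposition \ref{pr:kpai}.
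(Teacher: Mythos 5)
Your proposal is correct and follows essentially the same route as the paper's proof: reduce (\ref{kpi}) generator by generator to local $q$-Boson identities among one or two adjacent $G$-factors (using (\ref{zG}) to absorb the $z^{\pm\delta_{j0}}$ factors), verify the interior-generator identity from (\ref{qb}), verify the $b_0$ cases from the boundary-vector properties (\ref{et1})--(\ref{et2}), and dispose of $b_n$ by the $s\leftrightarrow s'$, $G^j_i\leftrightarrow G^i_j$, $\langle m|\leftrightarrow|m\rangle$ symmetry. The only cosmetic difference is your phrase ``scalar eigenvalue factors'' for the effect of (\ref{et1})--(\ref{et2}) --- these relations replace $\apm$ by an operator in $\ok$ (or by $\amp$), not by a scalar --- but this does not affect the argument.
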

\begin{proof}
It suffices to show (\ref{kpi}) for $b=b_0,\ldots, b_{n'}$ 
given by (\ref{bai})--(\ref{drk}).
There are seven cases to consider;
(i) $b=b_j$  (\ref{bai}) for $j \in \Z_n$,
(ii) $b=b_0$ (\ref{b0}) for $(r,k)=(1,2)$,
(iii) $b=b_0$ (\ref{b0}) for $(r,k)=(1,1)$,
(iv) $b=b_0$ (\ref{b0}) for $(r,k)=(2,2)$,
(v) $b=b_n$ (\ref{bn}) for $(r,k)=(1,2)$,
(vi) $b=b_n$ (\ref{bn}) for $(r,k)=(1,1)$,
(vii) $b=b_n$ (\ref{bn}) for $(r,k)=(2,2)$.
It is easy to show that 
(v), (vi), (vii) are attributed respectively to (ii), (iii), (iv) with $z=1$ 
by the exchange $s \leftrightarrow s'$, $G^j_i \leftrightarrow G^i_j$
and $\langle m | \leftrightarrow |m\rangle$.
See also \cite[eq.(1)]{KP} for a hint.
Thus we shall only treat the cases (i)--(iv) in the sequel.

(i) $b=b_j$  (\ref{bai}) for $j \in \Z_n$.
This covers $U_p(\mathfrak{g}^{\mathrm{tr}})$ 
whose sub-case $0<j<n$ also does the other $U_p(\mathfrak{g}^{r,r'})$.
In terms of the matrix element for the transition $v_\alpha \mapsto v^\ast_\beta$,
the equation (\ref{kpi}) is stated as
\begin{equation}
\begin{split}
&z^{-\delta_{j0}}K(z,q)_{\alpha+{\bf e}_j - {\bf e}_{j+1}}^\beta
+ z^{\delta_{j0}}p^{2(\alpha_j - \alpha_{j+1}-1)}
K(z,q)_{\alpha-{\bf e}_j + {\bf e}_{j+1}}^\beta
+ \frac{\I \epsilon p}{1-p^2}p^{2(\alpha_j-\alpha_{j+1})}K(z,q)_\alpha^\beta
\\
&=z^{\delta_{j0}}K(z,q)_\alpha^{\beta+{\bf e}_j - {\bf e}_{j+1}}
+z^{-\delta_{j0}}
p^{2(\beta_{j+1}-\beta_j+1)}K(z,q)_\alpha^{\beta-{\bf e}_j + {\bf e}_{j+1}}
+ \frac{\I \epsilon p}{1-p^2}p^{2(\beta_{j+1}-\beta_{j})}K(z,q)_\alpha^\beta,
\end{split}
\end{equation}
where $K=K^{\mathrm{tr}}\,(j \in \Z_n)$ or 
$K = K^{k,k'}\,(0<j<n)$.
In view of the matrix product formulas (\ref{ktr})--(\ref{k22}),
this follows from the quadratic relation 
(we set $(\alpha_j,\alpha_{j+1},\beta_j,\beta_{j+1}) = (a,a',b,b')$)
\begin{equation}\label{gg}
\begin{split}
&G^b_{a+1}G^{b'}_{a'-1}+ p^{2(a-a'-1)}G^b_{a-1}G^{b'}_{a'+1}
+ \frac{\I \epsilon p^{2(a-a')+1}}{1-p^2}G^b_{a}G^{b'}_{a'}
\\
&= G^{b+1}_aG^{b'-1}_{a'} + p^{2(b'-b+1)}G^{b-1}_aG^{b'+1}_{a'}
+ \frac{\I \epsilon p^{2(b'-b)+1}}{1-p^2}G^b_{a}G^{b'}_{a'},
\end{split}
\end{equation}
where we regard $G^b_a=0$ unless $a,b \in \{0,1\}$.
The factor $z^{\pm \delta_{j0}}$ has disappeared 
owing to (\ref{zG}).
It is elementary to check (\ref{gg}) by using (\ref{qb}).

(ii) $b=b_0$ (\ref{b0}) for $(r,k)=(1,2)$.
We are to check 
\begin{align}
z^{-1}K^{2,k'}(z,q)_{\alpha-{\bf e}_1}^\beta +
\frac{s'p^{-2\alpha_1}}{s}zK^{2,k'}(z,q)_{\alpha+{\bf e}_1}^\beta
= z K^{2,k'}(z,q)_\alpha^{\beta-{\bf e}_1}
+ \frac{s'p^{2\beta_1}}{s}z^{-1}K^{2,k'}(z,q)_\alpha^{\beta+{\bf e}_1}.
\end{align}
Due to (\ref{kkk}) and (\ref{k22}) this follows from
\begin{align}
\langle \eta_2|z^{\bf h}\bigl(
z^{-1}G^b_{a-1}+\frac{s'}{s}zG^b_{a+1}\bigr)
= \langle \eta_2|z^{\bf h}\bigl(
zG^{b-1}_a+\frac{s'}{s}z^{-1}G^{b+1}_a\bigr).
\end{align}
Thanks to (\ref{zG}) this further reduces to $z=1$,
which can be checked easily by (\ref{et2}).

(iii) $b=b_0$ (\ref{b0}) for $(r,k)=(1,1)$.
We are to check
\begin{equation}
\begin{split}
&z^{-1}K^{1,k'}(z,q)_{\alpha-{\bf e}_1}^\beta +
\frac{s'p^{-2\alpha_1}}{s}zK^{1,k'}(z,q)_{\alpha+{\bf e}_1}^\beta
+ d_{1,1}(s) p^{1-2\alpha_1}K^{1,k'}(z,q)_\alpha^\beta
\\
&= z K^{1,k'}(z,q)_\alpha^{\beta-{\bf e}_1}
+ \frac{s'p^{2\beta_1}}{s}z^{-1}K^{1,k'}(z,q)_\alpha^{\beta+{\bf e}_1}
+ d_{1,1}(s) p^{2\beta_1-1}K^{1,k'}(z,q)_\alpha^\beta.
\end{split}
\end{equation}
Due to (\ref{kkk}) this follows from
\begin{align}
\langle \eta_1|z^{\bf h}\bigl(
z^{-1}G^b_{a-1}+\frac{s'}{s}zG^b_{a+1}
+ d_{1,1}(s)p^{1-2a}G^b_a \bigr)
= \langle \eta_1|z^{\bf h}\bigl(
zG^{b-1}_a+\frac{s'}{s}z^{-1}G^{b+1}_a
+ d_{1,1}(s)p^{2b-1}G^b_a \bigr).
\end{align}
Due to (\ref{zG}) this reduces to $z=1$, 
which can be checked by (\ref{et1}).

(iv) $b=b_0$ (\ref{b0}) for $(r,k)=(2,2)$.
We are to check
\begin{equation}
\begin{split}
&z^{-2}K^{2,k'}(z,q)_{\alpha-{\bf e}_1-{\bf e}_2}^\beta +
\Bigl(\frac{s'z}{s}\Bigr)^2K^{2,k'}(z,q)_{\alpha+{\bf e}_1+{\bf e}_2}^\beta
+ d_{2,2}(s) p^{2(1-\alpha_1-\alpha_2)}K^{2,k'}(z,q)_\alpha^\beta
\\
&= z^2 K^{2,k'}(z,q)_\alpha^{\beta-{\bf e}_1-{\bf e}_2}
+ \Bigl(\frac{s'}{sz}\Bigr)^2 K^{2,k'}(z,q)_\alpha^{\beta+{\bf e}_1+{\bf e}_2}
+ d_{2,2}(s) p^{2(\beta_1+\beta_2-1)}K^{2,k'}(z,q)_\alpha^\beta.
\end{split}
\end{equation}
Due to (\ref{kkk}) and (\ref{k22}) this follows from
\begin{align}
&\langle \eta_2|z^{\bf h}\Bigl(
z^{-2}G^b_{a-1}G^{b'}_{a'-1}
+\Bigl(\frac{s'z}{s}\Bigr)^2G^b_{a+1}G^{b'}_{a'+1}
+ d_{2,2}(s)p^{2(1-a-a')}G^b_aG^{b'}_{a'} \Bigr)
\\
&= \langle \eta_2|z^{\bf h}\Bigl(
z^2G^{b-1}_aG^{b'-1}_{a'}
+\Bigl(\frac{s'}{sz}\Bigr)^2G^{b+1}_aG^{b'+1}_a
+ d_{2,2}(s)p^{2(b+b'-1)}G^b_aG^{b'}_{a'} \Bigr).
\end{align}
Again this reduces to $z=1$ case, which can be verified by (\ref{et2})
and (\ref{qb}).
\end{proof}

\subsection{Uniqueness and the proof of the reflection equation}

In this subsection we give a proof of the irreducibility of $V_{x}\ot V'_{y}$ as a 
module over $\mathcal{B}$ for various $\mathcal{B}$ in Section \ref{ss:coi} and $V,V'$ 
in Section \ref{ss:rep}. Once this irreducibility is proven, the proof of the reflection equation
\eqref{re1} is complete. If we consider $V'_y$ to be a trivial module, the uniqueness 
of the $K$ matrix in Section \ref{ss:cok} is also confirmed.

To show $V_x\ot V'_y$ is irreducible for generic values of $x,y$, it suffices to show it 
is so at a special value $y=x$. 
We consider $U_p$ or $\mathcal{B}$ as algebras over $\C(p)$
and $V_x,V'_x$ as modules over $\C(p,x)$.
Define $A=\{f(p,x)\in\C(p,x)\mid f(p,x)\text{ is regular at }x=0\}$. $A$ is a local ring
with the maximal ideal $xA$ and $A/xA\simeq\C(p)$.

Let $W$ be a nonzero subspace of $V_x\ot V'_x$ invariant under $\mathcal{B}$ and set
\begin{equation}
L=\bigoplus_{\alpha,\beta}Av_{\alpha}\ot v_{\beta},\quad L'=W\cap L.
\end{equation}
Apparently, $L$ is a finitely generated module over $A$ and $L'$ is a sub $A$-module
of $L$. 
Set $\tilde{b}_i=x_0^{\delta_{i0}}b_i$,
where $x_0=x^2$ 
for $U_p(\mathfrak{g}^{2,1})$, $U_p(\mathfrak{g}^{2,2})$ and 
$x_0=x$ for the other $U_p$. 
Note that the action of $\Delta(\tilde{b}_i)$ preserves $L$. 

Let $U_{p,0}$ be the subalgebra of $U_p$ generated by $e_i,f_i,k_i$ with $i\ne0$. 
Suppose that as a $U_{p,0}$-module we have the following decomposition.
\begin{equation} \label{decomp}
V_{x}\ot V'_{x}\simeq \bigoplus_{j=0}^lU_j
\end{equation}
Here we number $\{U_j\}_{0\le j\le l}$ in such a way that $j<k$ holds 
whenever $\lambda_j>\lambda_k$ occurs\footnote{
$\lambda > \mu$ is defined by $\lambda \neq \mu$ and 
$\lambda - \mu$ is a linear combination of simple roots with 
nonnegative coefficients.} 
where $\lambda_j$ is the highest weight of $U_j$.
Take any nonzero vector $u$ from $L'$. 
According to the decomposition \eqref{decomp},
$u$ can be expressed as
\begin{equation} \label{imozuru}
u=\sum_{j,\gamma}u_{j,\gamma}
\end{equation}
where $u_{j,\gamma}$ is a weight vector of weight $\gamma$ which belongs to
$U_j$. One can assume there exists $(j,\gamma)$ such that 
$u_{j,\gamma}\not\equiv0$ mod $xL$. 
Here and in what follows, $\equiv$ means an equality mod $xA$.
Take the minimal $j_0$ such that $u_{j_0,\gamma}\not\equiv0$ for some $\gamma$.
Applying $\Delta(\tilde{b}_i)$ ($i\ne0$) if necessary, one can assume 
$u_{j_0,\gamma_0}\not\equiv0$ where $\gamma_0$ is the lowest weight of 
$U_{j_0}$.

\begin{lemma}\label{lem:1}
Suppose $j>0$. Then there exists a sequence $i_1,\ldots,i_m\in\{0,1,\ldots,n'\}$ such
that $\Delta(\tilde{b}_{i_1}\cdots\tilde{b}_{i_m})u$ 
has a nonzero component in $U_{j'}$
($j'<j$).
\end{lemma}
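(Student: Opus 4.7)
The plan is to work at the special fibre $A/xA \simeq \C(p)$ and analyze how the reductions of $\Delta(\tilde{b}_i)$ mix the $U_{p,0}$-isotypic summands $U_j$. I would first examine $\pi_x(\tilde{b}_i)\bmod xA$. For $i \neq 0$, $\pi_x(b_i)$ is $x$-independent, so its reduction equals $\Delta(b_i)$ itself; its leading part encodes the $U_{p,0}$-action of $f_i + p^2 k_i^{-1} e_i$. For $i = 0$, the multiplier $x_0 = x^r$ in $\tilde{b}_0 = x_0 b_0$ is precisely what is needed to render the $\pi_x(f_0)$-term regular at $x=0$ (since $\pi_x(f_0)\sim z^{-r}$), while killing the $\pi_x(e_0)$-term (which carries $z^{r}$) and the constant $k_0^{-1}$-term (both multiplied by an extra $x^r$). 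Consequently, $\Delta(\tilde{b}_0)\bmod xA$ acts on $V\otimes V$ as $\bar{f}_0 \otimes 1 + k_0^{-1}\otimes\bar{f}_0$, where $\bar{f}_0$ is the spectral-parameter-free part of $\pi_x(f_0)$, i.e.\ (the classical image of) the root vector attached to the extending Dynkin node.

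Next, I apply this reduction to $u$ modulo $xL$. By the assumption $u_{j_0,\gamma_0}\not\equiv 0$ with $\gamma_0$ the lowest weight of $U_{j_0}$, the contribution $(\bar{f}_0\otimes 1 + k_0^{-1}\otimes\bar{f}_0)\,u_{j_0,\gamma_0}$ is a nonzero vector in $V\otimes V$ at the shifted classical weight $\gamma_0 + \theta_{\mathrm{cl}}$, where $\theta_{\mathrm{cl}}$ is the classical part of $-\alpha_0$ (the highest root in the untwisted cases). I would decompose this image along $V\otimes V = \bigoplus_j U_j$; the crux is to extract at least one nonzero component lying in some $U_{j'}$ with $j' < j_0$. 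If a single application of $\Delta(\tilde{b}_0)$ does not yet suffice, I would interleave $\Delta(\tilde{b}_i)$ for $i\neq 0$---whose reductions generate the full $U_{p,0}$-action---with further $\Delta(\tilde{b}_0)$ applications, using the ordering of the $U_j$'s (by decreasing highest weight) and the iterated weight shifts to navigate into the higher-highest-weight components.

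The main obstacle is the non-vanishing statement: verifying that some finite iteration $\Delta(\tilde{b}_{i_1}\cdots\tilde{b}_{i_m})u$ actually produces a nonzero component in an $U_{j'}$ with $j' < j_0$, rather than forever stabilizing inside $\bigoplus_{j\geq j_0}U_j$. This reduces to a purely classical claim: for the specific $V$ in play (a fundamental representation of $A_{n-1}$, or a spin representation of $B_n$, $D_n$, $D^{(2)}_{n+1}$), the algebra generated by $U_{p,0}$ together with the extra classical operator $\bar{f}_0$ acts irreducibly on $V\otimes V$. Since these $V$ are minuscule (or a fundamental in type $A$), $V\otimes V$ is multiplicity-free and its decomposition is explicit, so this irreducibility becomes a case-by-case weight-combinatorial check; the $D^{(2)}_{n+1}$ case, where $\bar{f}_0$ corresponds to a short rather than the highest root, requires a small separate verification.
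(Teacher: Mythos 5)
Your overall skeleton matches the paper's: work modulo $x$, observe that $\Delta(\tilde{b}_0)\equiv \bar{f}_0\otimes 1+k_0^{-1}\otimes\bar{f}_0$ because the factor $x_0$ kills the $e_0$- and $k_0^{-1}$-terms while making the $f_0$-term regular, and use this weight-raising operator on the lowest weight vector of $U_{j_0}$ to escape into some $U_{j'}$ with $j'<j_0$. However, the way you propose to close the argument has a genuine gap. First, the reductions of $\Delta(\tilde{b}_i)$ for $i\neq 0$ do \emph{not} generate the full $U_{p,0}$-action on $V\otimes V$: they generate only the image of the coideal $\langle b_i : i\neq 0\rangle$, which is much smaller (already for a single node, the algebra generated by the one operator $\Delta(b_i)$ is commutative, whereas the image of $\Delta(U_{p,0})$ is not). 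What is actually available, and what the paper exploits, is the weaker fact that $\Delta(\tilde{b}_i)$ acts as $f_i\otimes 1+k_i^{-1}\otimes f_i$ \emph{modulo terms of strictly higher weight}, so that one can navigate the weight filtration while controlling the extra terms coming from $k_i^{-1}e_i$ and $k_i^{-1}$.

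Second, and consequently, reducing the lemma to ``the algebra generated by $U_{p,0}$ and $\bar{f}_0$ acts irreducibly on $V\otimes V$'' is a reduction to the wrong statement. The lemma demands an explicit word in the generators $\tilde{b}_i$, and the submodule $W$ is only assumed invariant under the coideal; irreducibility under a strictly larger algebra neither produces such a word nor constrains $W$ (a subspace invariant under a small algebra need not be invariant under a larger, irreducibly acting one). Moreover, read literally as a ``purely classical claim'' it is false: classically $\mathfrak{g}_0$ together with the root vector $\bar{f}_0$ generates all of $\mathfrak{g}$, and $V\otimes V$ is reducible over $\mathfrak{g}$ --- the decomposition $\bigoplus_j U_j$ is essentially its isotypic decomposition. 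What makes the escape from $\bigoplus_{j\ge j_0}U_j$ possible is the $p$-deformation (the $k^{-1}\otimes(\cdot\,)$ twists in the coproduct), and whether the relevant transition coefficients --- the factors such as $1-p^4$ and $1-p^8$ that appear when $\Delta(\tilde{b}_0)$ hits the lowest weight vectors --- are nonzero is precisely the content that must be, and in the paper is, verified by direct case-by-case computation. Your proposal defers exactly this non-vanishing to an irreducibility claim that does not deliver it.
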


\begin{proof}
Since the proof depends on the cases of $U_p$ and $\mathcal{B}$, we illustrate it for 
(i) $U_p(\mathfrak{g}^{\mathrm{tr}})$ 
with $n=4$ and $V^m_x\ot V^{m'}_x$ with $m=m'=2$,
and (ii) $U_p(\mathfrak{g}^{1,1})$ with $n=2$ and $V_x\ot V_x$ (spin representation).

(i) In this case the decomposition \eqref{decomp} reads as
\begin{equation}
V^2_x\ot V^2_x\simeq U_0\oplus U_1\oplus U_2.
\end{equation}
Suppose $j=1$. One can assume \eqref{imozuru} is of the following form:
\begin{equation}
u\equiv(v_{(1100)}\ot v_{(1010)}-p^2v_{(1010)}\ot v_{(1100)})+\cdots.
\end{equation}
Here $\cdots$ contains a linear combination of weight vectors of weight strictly higher than
that of $v_{(1100)}\ot v_{(1010)}$. Noting that 
$\tilde{b}_0=x_0f_0$ mod
$xL'$. We have
\begin{equation}
\Delta(\tilde{b}^2_0)u\equiv [2]_{p^2}(v_{(0101)}\ot v_{(0011)}-p^2v_{(0011)}\ot v_{(0101)})+
\cdots
\end{equation}
It may happen that $\cdots$, terms with higher weights, contain a nonzero component 
in $U_0$. In this case the proof is over. Suppose they do not. Applying $\tilde{b}_3
\tilde{b}_1$ further, we get
\begin{equation}
\Delta(\tilde{b}_3\tilde{b}_1\tilde{b}^2_0)u\equiv 
[2]_{p^2}(v_{(1010)}\ot v_{(0011)}-p^2v_{(0011)}\ot v_{(1010)})+\cdots.
\end{equation}
Note that $\cdots$ contain terms produced by applying $k_i^{-1}e_i$ or $k_i^{-1}$
in $\tilde{b}_i$ for $i=1,3$. However the weights remain higher than that of 
$v_{(1010)}\ot v_{(0011)}$. Finally, applying $\tilde{b}_0$, we obtain 
\begin{equation}
\Delta(\tilde{b}_0\tilde{b}_3\tilde{b}_1\tilde{b}^2_0)u\equiv 
[2]_{p^2}(1-p^4)v_{(0011)}\ot v_{(0011)}+\cdots.
\end{equation}
The first term apparently belongs to $U_0$.

Suppose $j=2$. One can assume \eqref{imozuru} is of the following form:
\begin{align} \label{lwv1}
u\equiv v_{(1100)}\ot v_{(0011)}&-p^2v_{(1010)}\ot v_{(0101)}
+p^4(v_{(1001)}\ot v_{(0110)}+v_{(0110)}\ot v_{(1001)}) \nonumber \\
&-p^6v_{(0101)}\ot v_{(1010)}+p^8v_{(0011)}\ot v_{(1100)}.
\end{align}
Since $\dim U_0=1$, There are no other terms. Applying $\tilde{b}_0$ we have
\begin{equation}
\Delta(\tilde{b}_0)u\equiv (1-p^8)
(v_{(0101)}\ot v_{(0011)}-p^2v_{(0011)}\ot v_{(0101)}).
\end{equation}
Since it is not proportional to \eqref{lwv1}, it should contain a nonzero component in
$U_0$ or $U_1$.

(ii) Since the situation is similar, we only list formulas. For $j=1$ \eqref{imozuru} is 
of the form:
\begin{equation}
u\equiv v_{(11)}\ot v_{(10)}-pv_{(10)}\ot v_{(11)}+\cdots
\end{equation}
and we have
\begin{equation}
\Delta(\tilde{b}_0\tilde{b}_1\tilde{b}_0^2)u\equiv
[2]_p(1-p^2)v_{(00)}\ot v_{(00)}+\cdots.
\end{equation}
For $j=2$ \eqref{imozuru} is of the form:
\begin{equation}
u\equiv v_{(11)}\ot v_{(00)}-pv_{(10)}\ot v_{(01)}+p^3v_{(01)}\ot v_{(10)}-p^4v_{(00)}\ot v_{(11)}
\end{equation}
and we have
\begin{equation}
\Delta(\tilde{b}_0)u\equiv
(1+p^4)(v_{(01)}\ot v_{(00)}-pv_{(00)}\ot v_{(01)}).
\end{equation}
\end{proof}

From this lemma, one can assume $j_0=0$ and $\gamma_0$ is the lowest weight of
$U_0$. We can also show the following lemma.

\begin{lemma}
For a vector $u=\sum_{j,\gamma}u_{j,\gamma}$ 
such that $u_{0,\gamma_0}\not\equiv0$
where $\gamma_0$ is the lowest weight of $U_0$, there exists a sequence 
$i_1,\ldots,i_m\in\{0,1,\ldots,n'\}$ such that 
\[
\Delta(\tilde{b}_{i_1}\cdots\tilde{b}_{i_m})u\equiv u_0
\]
where $u_0$ is a nonzero highest weight vector of $U_0$.
\end{lemma}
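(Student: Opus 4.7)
The plan is to construct a word $\tilde{b}_{i_1}\cdots\tilde{b}_{i_m}$ using only the ``horizontal'' generators (those with $i_l\ne 0$, of the form \eqref{bai} or otherwise lying in the horizontal subalgebra $U_{p,0}$ of $U_p$) that iteratively raises the weight within the $U_0$-component of $u$ until it reaches the highest weight $\lambda_0$ of $U_0$. Since the weight-$\lambda_0$ subspace of $V_x\otimes V'_x$ lies entirely in $U_0$ by the ordering convention ($\lambda_j<\lambda_0$ for $j>0$) and is one-dimensional, any nonzero element there is automatically a nonzero highest weight vector of $U_0$, which is the sought $u_0$.

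The key structural observation is that for $i\ne 0$, since $\Delta$ is an algebra homomorphism,
\[
\Delta(\tilde{b}_i)=\Delta(f_i)+p^{2}\Delta(k_i^{-1}e_i)+\frac{\I\epsilon p}{1-p^{2}}\Delta(k_i^{-1}),
\]
whose three summands act as $U_{p,0}$-operators respectively lowering, raising, and preserving weight by $\alpha_i$. Consequently $\Delta(\tilde{b}_i)\in\Delta(U_{p,0})$ preserves the $U_{p,0}$-decomposition $V_x\otimes V'_x\equiv\bigoplus_j U_j\pmod{xL}$ and sends the weight-$\gamma$ subspace of each $U_j$ into the union of its weight-$(\gamma\pm\alpha_i)$ and weight-$\gamma$ subspaces.

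Writing $u\equiv\sum_{j,\gamma}u_{j,\gamma}\pmod{xL}$, set $u^0:=\sum_\gamma u_{0,\gamma}$, which is nonzero by hypothesis. Components $u_{j,\gamma}$ with $j>0$ stay inside $U_j$ (of highest weight $\lambda_j<\lambda_0$) under any iteration of $\Delta(\tilde{b}_i)$'s ($i\ne 0$), so they never contribute at weight $\lambda_0$ and may be ignored throughout. The plan then is: choose a weight $\mu$ that is maximal in the support of $u^0$ with $u^0_\mu\not\equiv 0$; if $\mu=\lambda_0$ we are done, otherwise $u^0_\mu$ is not a highest weight vector of the irreducible $U_{p,0}$-module $U_0$, so there exists some $i\ne 0$ with $e_iu^0_\mu\ne 0$. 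Apply $\Delta(\tilde{b}_i)$ and iterate.

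The main obstacle is to guarantee that after each step the weight-$(\mu+\alpha_i)$ component of the new $u^0$ is still nonzero: a priori it picks up contributions from $u^0_\mu$ (via raising), $u^0_{\mu+\alpha_i}$ (via preserving), and $u^0_{\mu+2\alpha_i}$ (via lowering). The maximality of $\mu$ kills the last two, so the net contribution reduces to $p^{2}k_i^{-1}e_iu^0_\mu\ne 0$, and $\mu+\alpha_i$ becomes a maximal weight in the new support with nonzero coefficient. The iteration thus strictly raises $\mu$ in the finite weight poset of $U_0$; since the only weight in $U_0$ at which all $e_i$ ($i\ne 0$) act as zero is $\lambda_0$ (by irreducibility and uniqueness of the highest weight vector), the procedure terminates with $\mu=\lambda_0$ and delivers the desired $u_0$.
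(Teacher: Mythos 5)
Your iteration does establish something true: there is a word in the $\tilde b_i$ with $i\neq 0$ whose application to $u$ has a nonzero component in the one-dimensional weight space of weight $\lambda_0$. But the lemma asserts strictly more: the resulting vector must \emph{equal} $u_0$ mod $xL$, i.e.\ every other component must vanish, because the next step of the paper's argument needs the highest weight vector itself (and not it plus junk) to lie in $L'+xL$ in order to regenerate all of $L$ from it. Your argument never eliminates the extra terms. Since your word lies in $\Delta(U_{p,0})$, each summand $U_j$ with $j>0$ is preserved, so the components $u_{j,\gamma}$ with $j>0$ survive to the end, as do the lower-weight components inside $U_0$ produced by the $\Delta(f_i)$ and $\Delta(k_i^{-1})$ parts of $\tilde b_i$. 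Declaring that these ``may be ignored'' is precisely the gap: the final vector is $c\,v_{\lambda_0}+(\text{junk at other weights})$, and no projection onto the weight-$\lambda_0$ space is available from the coideal generators.

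The paper's mechanism is different, and this is why the hypothesis is phrased in terms of the \emph{lowest} weight $\gamma_0$ of $U_0$ --- a hypothesis your proof never uses, which is itself a warning sign. Since $\gamma_0$ is the global lowest weight of $V_x\otimes V'_x$ and its weight space is one-dimensional, the hypothesis gives $u\equiv c\,v_{\gamma_0}+(\text{strictly higher})$ with $c\not\equiv 0$. The paper then applies a word that crucially contains the affine generator $\tilde b_0\equiv x_0 f_0$ (mod $x$), which shifts every weight uniformly by $+\theta$; together with the $f_i$-parts of the $\tilde b_i$ ($i\ne0$), the \emph{minimal} weight of the entire support is steered from $\gamma_0$ up to $\lambda_0$, with all other terms kept at strictly higher weight at every stage (this is exactly the computation $\Delta(\tilde b_0^2\tilde b_3^2\tilde b_1^2\tilde b_0^2)u\equiv[2]_{p^2}^4\,v_{(0011)}\otimes v_{(0011)}+\cdots$ in the paper's illustration). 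Once the tracked term reaches $\lambda_0$, the remaining terms would have to sit at weights strictly greater than $\lambda_0$, which do not exist in the module, so they vanish identically. Without $\tilde b_0$ this cannot work: no word in $U_{p,0}$ can raise the minimal weight of the support or move vectors out of a summand $U_j$.
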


\begin{proof}
As in the proof of Lemma \ref{lem:1}, we illustrate the proof for 
(i) $U_p(\mathfrak{g}^{\mathrm{tr}})$ with $n=4$ and $V^m_x\ot V^{m'}_x$ with $m=m'=2$,
and (ii) $U_p(\mathfrak{g}^{1,1})$ with $n=2$ and $V_x\ot V_x$ (spin representation).

(i) The vector $u$ is of the form:
\begin{equation}
u\equiv v_{(1100)}\ot v_{(1100)}+\cdots,
\end{equation}
where $\cdots$ contains a linear combination of weight vectors of weight strictly higher 
than that of $v_{(1100)}\ot v_{(1100)}$. Applying
$\tilde{b}_0^2\tilde{b}_3^2\tilde{b}_1^2\tilde{b}_0^2$, we have
\begin{equation}
\Delta(\tilde{b}_0^2\tilde{b}_3^2\tilde{b}_1^2\tilde{b}_0^2)u\equiv
[2]_{p^2}^4v_{(0011)}\ot v_{(0011)}+\cdots.
\end{equation}
Since $v_{(0011)}\ot v_{(0011)}$ is a highest weight vector, the part $\cdots$ should 
vanish.

(ii) The vector $u$ is of the form:
\begin{equation}
u\equiv v_{(11)}\ot v_{(11)}+\cdots,
\end{equation}
where $\cdots$ contains a linear combination of weight vectors of weight strictly higher 
than that of $v_{(11)}\ot v_{(11)}$. Applying
$\tilde{b}_0^2\tilde{b}_1^2\tilde{b}_0^2$, we have
\begin{equation}
\Delta(\tilde{b}_0^2\tilde{b}_1^2\tilde{b}_0^2)u\equiv
[2]_{p}^2[2]_{p^2}v_{(00)}\ot v_{(00)}+\cdots.
\end{equation}
Since $v_{(00)}\ot v_{(00)}$ is a highest weight vector, the part $\cdots$ should 
vanish.
\end{proof}

Hence, one finds that $u_0$
is contained in $L'$ mod $xL'$. Now apply $\Delta(\tilde{b}_i)$ successively
to obtain new vectors in $L'$. Since new vectors obtained are those by applying
$\Delta(x_0^{\delta_{i0}}f_i)$, $u_0$ generates all vectors in $L$ mod
$xL$, so we have $L=L'+xL$. By Nakayama's lemma (cf. \cite{AM}):
\begin{lemma}
Suppose that $R$ is a ring and $M$ is an $R$-module. Let $N$ be a submodule of $M$. 
If $M/N$ is finitely generated over $R$ and $M=N+J(R)M$, then $N=M$.
Here $J(R)$ is the Jacobson radical of $R$.
\end{lemma}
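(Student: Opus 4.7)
The plan is to reduce the statement to the classical form of Nakayama's lemma by passing to the quotient. First I would form the $R$-module $\bar{M} = M/N$. The hypothesis $M = N + J(R)M$ immediately yields $\bar{M} = J(R)\bar{M}$, since every coset $m + N$ with $m \in M$ can be written as $\sum_i a_i m_i + N$ with $a_i \in J(R)$ and $m_i \in M$, i.e.\ as $\sum_i a_i (m_i + N) \in J(R)\bar{M}$. Moreover $\bar{M}$ is finitely generated over $R$ by assumption. Thus the problem reduces to showing: if a finitely generated $R$-module $\bar{M}$ satisfies $\bar{M} = J(R)\bar{M}$, then $\bar{M} = 0$; once this is established, $\bar{M} = 0$ gives $N = M$.

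For the classical Nakayama step I would use the standard minimal-generator argument. Pick a minimal generating set $\bar{x}_1, \ldots, \bar{x}_k$ of $\bar{M}$ and suppose for contradiction $k \geq 1$. Then since $\bar{x}_k \in J(R)\bar{M}$, one can write
\begin{equation*}
\bar{x}_k = a_1 \bar{x}_1 + \cdots + a_k \bar{x}_k, \qquad a_i \in J(R).
\end{equation*}
Rearranging, $(1 - a_k)\bar{x}_k = a_1 \bar{x}_1 + \cdots + a_{k-1}\bar{x}_{k-1}$. The element $1 - a_k$ is a unit in $R$ because $a_k$ lies in the Jacobson radical (the defining property of $J(R)$ is that $1 - xa$ is a unit for every $x \in R$ when $a \in J(R)$). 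Multiplying by $(1-a_k)^{-1}$ expresses $\bar{x}_k$ in terms of $\bar{x}_1, \ldots, \bar{x}_{k-1}$, contradicting minimality; the base case $k=1$ is handled by the same computation which forces $\bar{x}_1 = 0$. Hence $k = 0$ and $\bar{M} = 0$.

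The only subtle point, and the place I would have to be careful, is the unit property $1 - a \in R^\times$ for $a \in J(R)$. This is essentially the defining characterization of the Jacobson radical as the intersection of all maximal left (or right, or two-sided, depending on context) ideals, and in the commutative setting it is elementary: if $1 - a$ were not a unit it would lie in some maximal ideal $\mathfrak{m}$, but $a \in J(R) \subset \mathfrak{m}$ would then force $1 \in \mathfrak{m}$, a contradiction. In the application here one has $R = A$, a commutative local ring with maximal ideal $xA$, so $J(R) = xA$ and the unit property is transparent (anything of the form $1 - xf$ with $f \in A$ is a unit in the local ring $A$), making the lemma directly applicable to the setup with $M = L$, $N = L'$, and $J(R)M = xL$ from the preceding paragraph.
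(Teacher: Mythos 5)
Your proof is correct: the reduction to $\bar{M}=M/N$ with $\bar{M}=J(R)\bar{M}$, followed by the minimal-generating-set argument using the invertibility of $1-a$ for $a\in J(R)$, is the standard proof of Nakayama's lemma. The paper itself offers no proof of this statement --- it is quoted verbatim as a known result with a citation to Atiyah--MacDonald --- so there is nothing to diverge from; your argument (and your closing remark that in the application $R=A$ is a commutative local ring with $J(A)=xA$, making the unit property transparent) supplies exactly the standard justification the authors are implicitly invoking.
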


\noindent
with $R=A$ ($J(R)=xA$) and $M=L,N=L'$, 
one obtains $L'=L$, and hence $W=V_{x}\ot V'_{x}$. 

The proof of the irreducibility of $V_x\ot V'_y$ is complete.

\section{Generalization}\label{s:gen}

\subsection{Inhomogeneous matrix product}
The parameters $s,s'$ entering (\ref{G}) may be 
taken independently in each $G$ in the matrix products
(\ref{ktr})--(\ref{k22})
as long as the relation (\ref{s12}) is kept.
To be concrete, we write (\ref{G}) obeying the constraint (\ref{s12}) as  
\begin{align}
\begin{pmatrix}
G(s)^0_0 & G(s)^0_1\\
G(s)^1_0 & G(s)^1_1
\end{pmatrix} = 
\begin{pmatrix}
\I qs^{-1} \ok & \am\\
\ap & \I s \ok
\end{pmatrix}.
\end{align}
Introduce the parameters 
$s_1, \ldots, s_n$ 
and consider the inhomogeneous generalization of
(\ref{ktr})--(\ref{k22}):
\begin{align}
K^{\mathrm{tr}}(z,q; s_1,\ldots, s_n)^\beta_\alpha 
&= \mathrm{Tr}\bigl( z^{\h} G(s_1)^{\beta_1}_{\alpha_1} 
\cdots G(s_n)^{\beta_n}_{\alpha_n} \bigr),
\label{iktr}\\
K^{k,k'}(z,q; s_1,\ldots, s_n)^\beta_\alpha 
&= \langle \eta_k | z^{\h} G(s_1)^{\beta_1}_{\alpha_1} 
\cdots G(s_n)^{\beta_n}_{\alpha_n} 
|\eta_{k'}\rangle
\label{ikkk},
\end{align}
where the normalization factors have been put aside for simplicity.
Using them as the matrix elements 
define $K^{\mathrm{tr}}(z,q; s_1,\ldots, s_n),
K^{k,k'}(z,q; s_1,\ldots, s_n) \in \mathrm{Hom}(V_z, V^\ast_{z^{-1}})$
similarly to (\ref{ka}).
The homogeneous case (\ref{ka})--(\ref{k22})
corresponds to $K(z,q,s,\ldots, s)$. 
The general case is reduced to it as
\begin{align}\label{KS}
S^{\ast }K(z,q; s_1,\ldots, s_n) S^{-1}= K(z,q; s,\ldots, s)
\qquad (K = K^{\mathrm{tr}}, K^{k,k'}),
\end{align}
where $S$ and $S^\ast$ are the diagonal matrices defined by 
\begin{align}\label{S}
S=\mathrm{diag}(S_\alpha)_{\alpha \in \mathrm{sp}},
\quad
S^\ast=\mathrm{diag}(S_{{\bf 1}-\alpha})_{\alpha \in \mathrm{sp}},
\quad S_\alpha = \prod_{j=1}^n(s_j/s)^{\alpha_j}.
\end{align}
As for the $R$ matrices, 
the weight conservation (\ref{wr}) and the relation (\ref{r1-}) imply 
\begin{align}\label{rd}
[R(z,q), S \otimes S] = 0,
\quad
[R^\ast(z,q), S \otimes S^\ast]=0,
\quad
[R^{\ast\ast}(z,q), S^\ast \otimes S^\ast]=0. 
\end{align}
From (\ref{KS}) and (\ref{rd}) 
it follows that $K(z,q; s_1,\ldots, s_n)$ also satisfies the 
same reflection equations as (\ref{re2}) and (\ref{re3}) 
without affecting the companion $R$ matrices.
In this sense $s_1, \ldots, s_n$ in  (\ref{iktr}) and 
(\ref{ikkk}) are just a sort of gauge parameters
as far as the reflection equation is concerned.

On the other hand their influence on the intertwining relation (\ref{kpi}) 
is less trivial. In fact it is transformed to
\begin{align}
K(z,q; s_1,\ldots, s_n)S^{-1}\pi_z(b)S = 
S^{\ast -1}\pi^\ast_{z^{-1}}(b) S^\ast
K(z,q; s_1,\ldots, s_n)\qquad (b \in \mathcal{B}).
\end{align}
One may interpret this as the 
intertwining relation for the representation 
connected to $\pi_z$ 
by the conjugation by $S$.
However there is another option 
to stay within the same representation 
saying that it is the coideal $\mathcal{B}=\langle b_0, \ldots, b_{n'}\rangle$
that has been deformed by the parameters $s_1, \ldots, s_n$. 
The latter is presented as 
$\mathcal{B}(s_1,\ldots, s_n)
= \langle \hat{b}_0, \ldots, \hat{b}_{n'}\rangle$
in terms of the new generators $\hat{b}_i$ satisfying 
\begin{align}
S^{-1}\pi_z(b_i)S = \pi_z(\hat{b}_i),
\qquad
S^{\ast -1}\pi^\ast_{z^{-1}}(b_i)S^\ast  = \pi^\ast_{z^{-1}}(\hat{b}_i).
\end{align}
Below we describe such $\hat{b}_i$'s generating 
$\mathcal{B}(s_1,\ldots, s_n)$ concretely.

$\mathcal{B}^{\mathrm{tr}}(s_1,\ldots, s_n)$ is the 
left coideal subalgebra of $U_p(\mathfrak{g}^{\mathrm{tr}})$ 
generated by 
\begin{align}
\hat{b}_i &= \frac{s_{i+1}}{s_i}f_i 
+\frac{s_i}{s_{i+1}}p^{2} k^{-1}_ie_i 
+ \frac{\I \epsilon p}{1-p^2}k^{-1}_i\quad (i \in \Z_n).
\label{bais}
\end{align}
Of course it actually depends on the ratio of $s_i$'s, 
hence forms an $(n-1)$-parameter family.

$\mathcal{B}^{r,r'}_{k,k'}(s_1,\ldots, s_n)$ 
with $r\le k$ and $r'\le k'$ is the 
left coideal subalgebra of $U_p(\mathfrak{g}^{r,r'})$ 
generated by 
\begin{alignat}{2}
\hat{b}_0 &= f_0 + \frac{s_1'}{s_1}p k_0^{-1}e_0 
&\quad (r,k)&=(1,2),
\\
&=  f_0 + \frac{s_1'}{s_1}p k_0^{-1}e_0 
+ \frac{\epsilon+\I p}{s_1(1-p^2)}k^{-1}_0
&\quad (r,k)&=(1,1),
\\
&=  f_0 + \Bigl(\frac{s_2'}{s_1}p\Bigr)^2k_0^{-1}e_0 
+ \Bigl(\frac{s_2'}{s_1}\Bigr)\frac{\I \epsilon p}{1-p^2}k^{-1}_0
&\quad (r,k)&=(2,2),
\\
\hat{b}_i &= \ \frac{s_{i+1}}{s_i}f_i 
+\frac{s_i}{s_{i+1}}p^{2} k^{-1}_ie_i 
+ \frac{\I \epsilon p}{1-p^2}k^{-1}_i
\quad(0 < i <n),
\label{bg}\\
\hat{b}_n &= f_n + \frac{s_n}{s'_n}p k_n^{-1}e_n
&\quad (r',k')&=(1,2),
\\
&=  f_n + \frac{s_n}{s'_n}p k_n^{-1}e_n 
+ \frac{\epsilon+\I p}{s'_n(1-p^2)}k^{-1}_n
&\quad (r',k')&=(1,1),
\\
&=  f_n + \Bigl(\frac{s_{n-1}}{s'_n}p\Bigr)^2k_n^{-1}e_n 
+ \Bigl(\frac{s_{n-1}}{s'_n}\Bigr)\frac{\I \epsilon p}{1-p^2}k^{-1}_n
&\quad (r',k')&=(2,2),
\label{bns}
\end{alignat}
where $s'_j = q/s_j = -\I \epsilon/( ps_j)$ in all the cases.
The definitions (\ref{bais})--(\ref{bns}) 
reduce to (\ref{bai})--(\ref{bn}) in the
homogeneous case $s_1=\cdots =s_n=s=q/s'$.

The argument so far is summarized as 
a multi-parameter generalization of 
Proposition \ref{pr:kpai}:

\begin{proposition}\label{pr:ks}
Under the condition $s_js'_j = q = -\I \epsilon p^{-1}\,(1 \le j \le n)$,
the $K$ matrices defined by the inhomogeneous matrix products 
(\ref{iktr}) and (\ref{ikkk}) satisfy the intertwining relation
\begin{align}\label{my}
K(z,q; s_1,\ldots, s_n) \pi_z(b) = 
\pi^\ast_{z^{-1}}(b)K(z,q; s_1,\ldots, s_n) \qquad 
(b \in \mathcal{B}(s_1,\ldots, s_n)),
\end{align}
where
$(K,\mathcal{B}) 
= (K^{\mathrm{tr}}, \mathcal{B}^{\mathrm{tr}})$
for $U_p(\mathfrak{g}^{\mathrm{tr}})$ and 
$(K,\mathcal{B}) = (K^{k,k'}, \mathcal{B}^{r,r'}_{k,k'})$
for $U_p(\mathfrak{g}^{r,r'})$ with $r\le k$ and $r' \le k'$.
\end{proposition}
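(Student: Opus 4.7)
The plan is to deduce Proposition \ref{pr:ks} from the homogeneous case Proposition \ref{pr:kpai} via the gauge transformation (\ref{KS}) and the conjugation relations $S^{-1}\pi_z(b_i)S=\pi_z(\hat b_i)$ (up to an overall scalar that is immaterial for the intertwining identity), both sketched in the paragraphs preceding the statement. The substantive work is to verify these two structural inputs and then combine them with the already-established homogeneous result.

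For the first input, (\ref{KS}), a direct inspection of the matrix product formulas (\ref{iktr}) and (\ref{ikkk}) should suffice. Each entry of $G(s)$ in (\ref{G}) satisfies $G(s)^b_a=s^{a+b-1}G(1)^b_a$ for $a,b\in\{0,1\}$, so the $s_j$-dependence factors out as
\[
K(z,q;s_1,\ldots,s_n)^\beta_\alpha
=\Bigl(\prod_{j=1}^n s_j^{\alpha_j+\beta_j-1}\Bigr)M^\beta_\alpha,
\]
where $M^\beta_\alpha$ is the same expression evaluated at $s_j=1$. Comparing with the specialization $s_j=s$ and using (\ref{S}) identifies the resulting scalar multiplier $\prod_j(s/s_j)^{\alpha_j+\beta_j-1}$ with $(S^\ast)_\beta/S_\alpha$, which is precisely (\ref{KS}). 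The three variants (trace for $\mathfrak{g}^{\mathrm{tr}}$; $(k,k')\neq(2,2)$; $(k,k')=(2,2)$) are handled uniformly since the parity selection does not enter this argument.

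For the second input, the proof mirrors the seven-case analysis carried out for Proposition \ref{pr:kpai}. Each summand of $b_i$ sends $v_\alpha$ to a scalar multiple of a vector $v_\beta$ with a fixed weight shift $\beta-\alpha$, and conjugation by $S$ multiplies the coefficient by $S_\alpha/S_\beta$, a factor depending only on this shift. For $0<i<n$ the shifts $\pm({\bf e}_i-{\bf e}_{i+1})$ produce the factors $s_{i+1}/s_i$ and $s_i/s_{i+1}$ appearing in (\ref{bais}) and (\ref{bg}). For the boundary generators $\hat b_0,\hat b_n$ in each of the variants $(r,k),(r',k')\in\{(1,1),(1,2),(2,2)\}$, reading off the shifts and employing the constraint $s_js'_j=q=-\I\epsilon/p$ yields the coefficients listed around (\ref{bns}); the one simplification of substance is the identity $(1+q)\I p=\epsilon+\I p$, which reconciles the inhomogeneous constant terms in the $(1,1)$ cases. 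The dual statement $(S^\ast)^{-1}\pi^\ast_{z^{-1}}(b_i)S^\ast=\pi^\ast_{z^{-1}}(\hat b_i)$ follows by the same computation after the identification $v^\ast_\alpha=v_{{\bf 1}-\alpha}$, under which $S$ is replaced by $S^\ast$.

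Combining the two inputs is immediate. Writing $K=K(z,q;s_1,\ldots,s_n)$ and $K_0=K(z,q;s,\ldots,s)$, so that $K_0=S^\ast K S^{-1}$ by (\ref{KS}), substitute into the homogeneous intertwiner $K_0\pi_z(b_i)=\pi^\ast_{z^{-1}}(b_i)K_0$ from Proposition \ref{pr:kpai}, apply the conjugation identities on each side, and cancel the outer $S^\ast$ on the left and $S^{-1}$ on the right; this yields (\ref{my}) for each generator $\hat b_i$, and the relation extends to arbitrary $b\in\mathcal{B}(s_1,\ldots,s_n)$ by algebra. The main obstacle is purely bookkeeping rather than conceptual: the seven boundary subcases must be carried through carefully to confirm that the inhomogeneous constants in $\hat b_0,\hat b_n$ agree with the conjugation image of the homogeneous $b_0,b_n$ up to a common scalar, which is precisely where the parameter constraints among $\epsilon,q,p,s,s'$ are used.
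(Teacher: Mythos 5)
Your proposal is correct and takes essentially the same route as the paper: Section~\ref{s:gen} derives Proposition~\ref{pr:ks} from the homogeneous Proposition~\ref{pr:kpai} precisely via the gauge identity (\ref{KS}) and the conjugation relations $S^{-1}\pi_z(b_i)S=\pi_z(\hat b_i)$, $S^{\ast-1}\pi^\ast_{z^{-1}}(b_i)S^\ast=\pi^\ast_{z^{-1}}(\hat b_i)$, which is exactly the structure you verify (including the correct observation that $G(s)^b_a=s^{a+b-1}G(1)^b_a$ and that the residual scalars such as $s/s_1$ on the boundary generators match on both sides and hence cancel). The paper's one-line proof merely adds that a direct generator-by-generator check as in Proposition~\ref{pr:kpai} is also possible, so there is no substantive difference.
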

One can check the claim directly as 
the proof of Proposition \ref{pr:kpai}.

\subsection{Relation to generalized Onsager algebras}\label{ss:o}

Consider the element 
\begin{align}\label{A}
A_i = c_i f_i + \bar{c}_i k^{-1}_i e_i + d_i k^{-1}_i  \in U_p
\quad (0 \le i \le n'), 
\end{align}
where $c_i, \bar{c}_i, d_i$ are coefficients.
In view of 
\begin{align}
\Delta A_i = k_i^{-1}\otimes A_i +
(c_i f_i + \bar{c}_ik^{-1}_i e_i) \otimes 1, 
\end{align}
the subalgebra $\mathcal{A}\subset U_p$ 
generated by $A_0,\ldots, A_n$ 
forms a left coideal 
$\Delta \mathcal{A} \subset U_p \otimes \mathcal{A}$.

In \cite{BB} the condition on the coefficients $c_i, \bar{c}_i, d_i$ are determined
so that there are a closed set of relations among $A_i$'s that 
make $\mathcal{A}\subset U_p$ isomorphic 
to the {\em generalized $p$-Onsager algebra}.
Let us concentrate on 
$U_p(\mathfrak{g}^{\mathrm{tr}})$ and $U_p(\mathfrak{g}^{r,r'})$
treated in this paper. 
Then except for $U_p(A^{(1)}_1)$ which is the smallest example 
from the former,
the condition is described as follows:
\begin{alignat}{2}
&d_i\bigl( c_j\bar{c}_j + d_j^2(1-p^2)^2\bigr) = 
d_j\bigl( c_i\bar{c}_i + d_i^2(1-p^2)^2\bigr) = 0
&\quad &\text{if}\;\,(a_{ij},a_{ji})=(-1,-1),
\label{o1}\\
&d_j\bigl( c_i\bar{c}_i + d_i^2(1-p^2)^2\bigr) = 0
&\quad &\text{if}\;\,(a_{ij},a_{ji})=(-1,-2),
\label{o2}
\end{alignat}
where $(a_{ij})_{0 \le i,j \le n'}$ is the Cartan matrix appearing in (\ref{uqdef}).
(The case $U_p(A^{(1)}_1)$ is exceptional in that 
there is no constraint on the parameters.)
In the resulting generalized $p$-Onsager algebra,  
the structure constants themselves are dependent on  
$c_i, \bar{c}_i$ obeying the above constraint.
We call such a coideal subalgebra of $U_p$ 
an {\em Onsager coideal} for simplicity. 

It turns out that 
$\mathcal{B}^{\mathrm{tr}}(s_1,\ldots, s_n)$ and 
$\mathcal{B}^{r,r'}_{k,k'}(s_1,\ldots, s_n)$ defined via  
the generators $\hat{b}_i$'s in 
(\ref{bais})--(\ref{bns}) satisfy (\ref{o1}) and (\ref{o2}).
This fact can be checked case by case.
In fact for the generic situation where the adjacent nodes $i,j$ are
apart from boundaries of a Dynkin diagram, 
(\ref{o1}) follows from (\ref{bais}) or (\ref{bg}) 
by setting $c_i = \frac{s_{i+1}}{s_i},
\bar{c}_i = \frac{s_i}{s_{i+1}}$ and 
$d_i = \frac{\I \epsilon}{1-p^2}$.
As an example of non generic cases,
consider $U_p(D^{(2)}_{n+1})$  
for which (\ref{o2}) matters
for $(i,j)=(1,0)$ and $(n-1,n)$.
It indeed holds because
$\hat{b}_i$ in either case is given by (\ref{bg})
hence $c_i\bar{c}_i + d_i^2(1-p^2)^2=0$ is valid.
By such verification one can establish 
\begin{proposition}\label{pr:on}
$\mathcal{B}^{\mathrm{tr}}(s_1,\ldots, s_n)$ 
and 
$\mathcal{B}^{r,r'}_{k,k'}(s_1,\ldots, s_n)\, (r \le k, r' \le k')$
are Onsager coideals. 
\end{proposition}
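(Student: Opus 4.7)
The plan is to verify the algebraic identities (\ref{o1})--(\ref{o2}) directly from the generators (\ref{bais})--(\ref{bns}). Reading off the coefficients $(c_i,\bar c_i,d_i)$ of each $\hat b_i$ in the template (\ref{A}), the key quantity is
\begin{align*}
\Phi_i := c_i\bar c_i + d_i^2(1-p^2)^2,
\end{align*}
since both (\ref{o1}) and (\ref{o2}) assert vanishing of a product $d_j\Phi_i$ (together with the symmetric $d_i\Phi_j$ in the $(-1,-1)$ case). First I would compute $\Phi_i$ for each of the four kinds of generators that appear: (a) the interior generator (\ref{bg}) with $c_i=s_{i+1}/s_i$, $\bar c_i=p^2 s_i/s_{i+1}$, $d_i=\I\epsilon p/(1-p^2)$, giving $c_i\bar c_i=p^2$, $d_i^2(1-p^2)^2=-p^2$, hence $\Phi_i=0$; (b) the $(r,k)=(2,2)$ boundary generator, where $c_0\bar c_0=(s'_2/s_1)^2p^2$ and $d_0^2(1-p^2)^2=-(s'_2/s_1)^2p^2$, again yielding $\Phi_0=0$; (c) the $(r,k)=(1,2)$ boundary generator, where $d_0=0$ outright; and (d) the $(r,k)=(1,1)$ boundary generator, for which neither $d_0$ nor $\Phi_0$ vanishes in general. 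The corresponding statements for $\hat b_n$ follow by the $s\leftrightarrow s'$ symmetry. Using the constraint $ss'=q=-\I\epsilon p^{-1}$ is essential in (b).

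Next I would traverse the Dynkin diagrams listed in Section \ref{sec:qa} and, for each adjacent pair $(i,j)$, identify which of (\ref{o1}) or (\ref{o2}) applies and locate it in the table of $\Phi$'s. For $U_p(\mathfrak{g}^{\mathrm{tr}})=U_p(A^{(1)}_{n-1})$ with $n\ge 3$ every edge has $(a_{ij},a_{ji})=(-1,-1)$ and every generator is of type (a), so (\ref{o1}) is immediate from $\Phi_i=\Phi_j=0$. For $U_p(\mathfrak{g}^{r,r'})$, all edges between interior nodes are again $(-1,-1)$ with both $\Phi$'s vanishing, so the task reduces to the two edges incident to the boundary nodes $0$ and $n$. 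At node $0$ with $r=1$ (i.e., $D^{(2)}_{n+1}$ or $\tilde B^{(1)}_n$), the adjacency is $(a_{10},a_{01})=(-1,-2)$, so (\ref{o2}) requires only $d_0\Phi_1=0$, which holds since $\Phi_1=0$. At node $0$ with $r=2$ (i.e., $B^{(1)}_n$ or $D^{(1)}_n$), the adjacent node is $2$ with $(a_{02},a_{20})=(-1,-1)$, and (\ref{o1}) demands $d_0\Phi_2=d_2\Phi_0=0$; since $\Phi_2=0$ by (a) and $\Phi_0=0$ by (b) (note $(r,k)=(2,2)$ is forced when $r=2$), both products vanish. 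The boundary node $n$ is handled identically.

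The main obstacle, such as it is, is the bookkeeping at the boundary: one must keep track of which pairs $(r,k)$ are permitted by $r\le k$, which Dynkin edge patterns $(a_{ij},a_{ji})$ correspond to which pair, and which of the four generator templates (a)--(d) is involved on each side. The crucial observation that makes everything click is that the exceptional case (d), where neither $d_0$ nor $\Phi_0$ vanishes, is allowed to occur only when $r=1$, and then the unique boundary edge has the asymmetric Cartan type $(-1,-2)$ for which the weaker condition (\ref{o2}) applies and is satisfied by $\Phi_1=0$ alone. Once this compatibility is checked, the proposition follows for all nine coideals $\mathcal B^{r,r'}_{k,k'}$ together with $\mathcal B^{\mathrm{tr}}$, establishing the generalized $p$-Onsager presentation in the sense of \cite{BB}.
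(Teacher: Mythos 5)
Your proposal is correct and follows exactly the paper's route: a case-by-case verification of (\ref{o1})--(\ref{o2}) by computing $c_i\bar c_i+d_i^2(1-p^2)^2$ for each generator template and matching it against the Cartan data of each Dynkin edge (the paper only sketches this, exhibiting the generic interior case and the $D^{(2)}_{n+1}$ boundary edge, whereas you spell out the full bookkeeping). The only quibble is your remark that the constraint $s_js_j'=q$ is ``essential'' in case (b): the factor $(s_2'/s_1)^2$ cancels between the two terms of $\Phi_0$ regardless, so that vanishing needs only $\epsilon^2=1$; this does not affect the argument.
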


Synthesizing Proposition \ref{pr:ks} and Proposition \ref{pr:on}
we arrive at the conclusion of this section: 
\begin{theorem}\label{th:pp}
For $(K, R, \mathcal{B}) 
= (K^{\mathrm{tr}},  R^{\mathrm{tr}},  \mathcal{B}^{\mathrm{tr}})$  
and  $(K^{k,k'}, R^{r,r'}, \mathcal{B}^{k,k'})$ with $r \le k, r' \le k'$,
the matrix product construction (\ref{iktr})--(\ref{ikkk}) 
leads to the intertwiner $K(z,q; s_1,\ldots, s_n)$ of the 
Onsager coideal $\mathcal{B}(s_1,\ldots, s_n)$
which satisfies the reflection equation
\begin{equation}\label{mtkn}
\begin{split}
&K_2(y,q; s_1,\ldots, s_n) R^\ast_{21}(xy,q)
K_1(x,q; s_1,\ldots, s_n) R_{12}(x/y,q)
\\
&=
R^{\ast\ast}_{21}(x/y,q)K_1(x,q; s_1,\ldots, s_n)
R^\ast_{12}(xy,q)K_2(y,q; s_1,\ldots, s_n).
\end{split}
\end{equation}
\end{theorem}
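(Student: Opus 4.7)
The plan is to combine the two preceding propositions: the intertwining relation is already established by Proposition \ref{pr:ks}, and Proposition \ref{pr:on} identifies $\mathcal{B}(s_1,\ldots,s_n)$ as an Onsager coideal, so the only residual content of Theorem \ref{th:pp} is the reflection equation (\ref{mtkn}). My strategy is to reduce (\ref{mtkn}) to its homogeneous specialization $s_1=\cdots=s_n=s$, which is already known via the Corollary in Section \ref{sec:mr} (itself a consequence of Theorems \ref{th:A}, \ref{th:BD2}, \ref{th:D}).

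The reduction uses the gauge transformation (\ref{KS}) together with the diagonal commutation relations (\ref{rd}). Concretely, rewrite
\begin{align*}
K(z,q;s_1,\ldots,s_n) = (S^\ast)^{-1} K(z,q;s,\ldots,s)\,S
\end{align*}
and insert this into both sides of the known homogeneous reflection equation. The diagonal matrices $S$ and $S^\ast$ act on $V$ and $V^\ast$ respectively, and by (\ref{rd}) they commute through the $R$-matrices in the appropriate combinations $S\otimes S$, $S\otimes S^\ast$, $S^\ast\otimes S^\ast$. Following the arrows of the diagram (\ref{zu}), each $K$-insertion contributes exactly one pair $(S,S^\ast)$ at its endpoints, and the factors on adjacent arrows cancel pairwise, leaving precisely (\ref{mtkn}) with the inhomogeneous $K$-matrices. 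Thus the general reflection equation is equivalent to the homogeneous one.

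The step that requires care is verifying that the $S$-factors telescope correctly across the two sides of the diagram, since the spaces $V_x, V_y, V^\ast_{x^{-1}}, V^\ast_{y^{-1}}$ carry different gauge actions ($S$ on $V$ and $S^\ast$ on $V^\ast$). Once one lays out the tensor leg on which each $S$ acts in each of the six arrows of (\ref{zu}), (\ref{rd}) guarantees the needed commutations; the matching of starting and ending legs for the four $K$-insertions then produces a cancellation of all gauge factors on both sides simultaneously. This is the main, but entirely mechanical, verification.

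As an alternative route, one could prove (\ref{mtkn}) directly by the linearization argument of Section \ref{ss:cok} applied to the deformed Onsager coideal: Proposition \ref{pr:ks} provides the needed intertwining with $\mathcal{B}(s_1,\ldots,s_n)$, and the irreducibility of $V_x\otimes V_y$ over this deformed coideal follows from the homogeneous irreducibility of Section \ref{sec:p} by conjugating the argument by $S\otimes S$, since the generators $\hat{b}_i$ of (\ref{bais})--(\ref{bns}) are nothing but $S^{-1}\pi_z(b_i)S$ in the representation. Either route completes the proof; the gauge reduction is the shorter.
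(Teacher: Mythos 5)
Your proposal is correct and follows essentially the same route as the paper: Theorem \ref{th:pp} is obtained by synthesizing Proposition \ref{pr:ks} (intertwining) with Proposition \ref{pr:on} (Onsager coideal identification), while the reflection equation (\ref{mtkn}) is reduced to its homogeneous specialization via the gauge relation (\ref{KS}) and the commutations (\ref{rd}), exactly as the paper argues in the paragraph preceding the theorem. Your telescoping of the $S$, $S^\ast$ factors through the diagram (\ref{zu}) is the correct (and in fact slightly more explicit) version of the paper's one-line justification.
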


We note that the constraint like (\ref{o1}) and (\ref{o2}) 
has also emerged in \cite[Rem.2]{KOY} as a necessary condition for the 
existence of the intertwining $K$ matrix for the 
symmetric tensor representations of $U_p(A^{(1)}_{n-1})$.

\section{Summary}\label{s:cdk}
We have introduced the Onsager coideals 
$\mathcal{B}(s_1,\ldots, s_n)$ with generators in (\ref{bais})--(\ref{bns})
and determined their intertwiner, the $K$ matrices, characterized by (\ref{my}) for 
the fundamental representations of $U_p(A^{(1)}_{n-1})$ and the 
spin representations of $U_p(B^{(1)}_n)$, $U_p(D^{(1)}_n)$ and $U_p(D^{(2)}_{n+1})$
specified in (\ref{rep}).
The results are expressed in the matrix product form as in 
(\ref{iktr}) and (\ref{ikkk}), and 
satisfy the reflection equation (\ref{mtkn}).
(The companion $R$ matrices are also given by the matrix product formulas 
in (\ref{Rtr})--(\ref{r1-}).)

This paper elucidates a prominent role of Onsager coideals 
in the study of reflection $K$ matrices.
It also sheds new light 
on the entirely different approach in \cite{KP} based on the 
three dimensional integrability and the 
quantized reflection equation digested in Appendix \ref{app:qre}. 
In fact the conjectural results there for 
$U_p(B^{(1)}_n)$, $U_p(D^{(1)}_n)$, $U_p(D^{(2)}_{n+1})$
have been established here for the first time.
We have stressed two essential aspects; 
connection to the Onsager coideals and 
matrix product construction by $q$-Bosons.
Their interrelation deserves a further investigation.

\newpage

\appendix

\section{Quantized reflection equation}\label{app:qre}
Let us recall the basic ingredients that led to the 
matrix product solutions (\ref{ktr}), (\ref{kkk}) to the reflection equation.
We regard $L$ (\ref{L}) as an operator 
in $\in \mathrm{End}(\C^2 \otimes \C^2 \otimes F_{q^2})$ 
defined by 
\begin{align}
L(u_i \otimes u_j \otimes |m\rangle) = 
\sum_{a,b \in \{0,1\}}u_a \otimes u_b \otimes L^{a,b}_{i,j}|m\rangle,
\end{align}
where $u_0, u_1$ are bases of $\C^2 = \C u_0 \oplus \C u_1$.
We also introduce its variants
corresponding to (\ref{R1})--(\ref{Rss1}): 
\begin{align}
L^\ast(u_i \otimes u_j \otimes |m\rangle) &= 
\sum_{a,b \in \{0,1\}}u_a \otimes u_b \otimes L^{a,b \,\ast}_{i,j}|m\rangle,
\quad\;
L^{a,b\, \ast}_{i,j} = L^{a,1-b}_{i,1-j},
\label{Ls}\\
L^{\ast\ast}(u_i \otimes u_j \otimes |m\rangle) &= 
\sum_{a,b \in \{0,1\}}
u_a \otimes u_b \otimes L^{a,b\, \ast\ast}_{i,j}|m\rangle,
\quad L^{a,b\, \ast\ast}_{i,j} = L^{1-a,1-b}_{1-i,1-j}.
\label{Lss}
\end{align}
Similarly we regard $G$ (\ref{G}) as an operator in
$\mathrm{End}(\C^2 \otimes F_q)$ defined by
\begin{align}
G(u_i \otimes |m\rangle ) = \sum_{a \in \{0,1\}}u_a 
\otimes G_i^a |m\rangle.
\end{align}

The quantized reflection equation \cite{KP} is the 
reflection equation up to conjugation:
\begin{align}\label{qre}
(L^{\ast\ast}_{21}G_1L^\ast_{12}G_2) \mathcal{K}
= \mathcal{K}(G_2L^{\ast}_{21}G_1L_{12}),
\end{align}
where the conjugation operator $\mathcal{K} \in \mathrm{End}
(F_{q^2}\otimes F_q \otimes F_{q^2} \otimes F_q)$ is called 3D $K$.
It is depicted as follows:
\begin{align*}
\begin{picture}(200,94)(-20,-8)
\put(0,-9){\line(0,1){90}}
\put(0,20){\line(-1,-2){13}}\put(0,20){\vector(-1,2){30}}
\put(0,50){\line(-2,-1){40}}\put(0,50){\vector(-2,1){40}}
\put(-18,63){3}\put(4,46){4}\put(-16.7,33){5}\put(4,16){6}
\put(-48,25){1} \put(-19,-14){2}
\put(20,30){$\circ\; \,\mathcal{K}_{3456}\;\;=\;\; \mathcal{K}_{3456}\,\;\circ$}
\put(175,0){
\put(0,-9){\line(0,1){90}}
\put(0,50){\line(-1,-2){28}}\put(0,50){\vector(-1,2){15}}
\put(0,20){\line(-2,-1){40}}\put(0,20){\vector(-2,1){40}}
\put(-18.3,1.2){3}\put(4,16){4}\put(-17,32){5}\put(4,46){6}
\put(-48,-4){1} \put(-34,-14){2}
}
\end{picture}
\end{align*}
The equality (\ref{qre}) is to hold in 
$\mathrm{End}(\overset{1}{\C^2}
\otimes \overset{2}{\C^2} \otimes 
\overset{3}{F_{q^2}}\otimes 
\overset{4}{F_{q}}\otimes 
\overset{5}{F_{q^2}}\otimes 
\overset{6}{F_{q}})$, 
where the labels $1,\ldots, 6$ are assigned just for explanation.
The vertical straight line on each side is the reflecting boundary.
The other arrows labeled with $1,2$ signify $\C^2$.
At each vertex one should imagine an extra arrow penetrating it perpendicularly 
to the sheet from the back to the front.
Those going through $4,6$ (resp. $3,5$) are assigned with $F_{q}$ 
(resp. $F_{q^2}$). 
One may regard $\mathcal{K}_{3456}$ in the left (resp. right) hand side
as a point in the back (resp. front) of the diagram where 
the four arrows going toward (resp. coming from) 
the vertices $3,4,5,6$  intersect.
If the labels of these spaces are exhibited, (\ref{qre}) becomes
\begin{align}
L^{\ast\ast}_{213}G_{14}L^\ast_{125}G_{26}\mathcal{K}_{3456}
= \mathcal{K}_{3456}G_{26}L^\ast_{215}G_{14}L_{123}.
\end{align}
In terms of the operators in (\ref{L}), (\ref{G}), (\ref{Ls}) and (\ref{Lss}),
this is expressed as
\begin{align}\label{qre2}
\Bigl(\;\sum 
L^{j_3,i_3\,\ast\ast}_{j_2 i_2} \otimes 
G^{i_2}_{i_1} \otimes
L^{i_1 j_2 \ast}_{i_0 j_1} \otimes
G^{j_1}_{j_0}\Bigr) \mathcal{K}
= \mathcal{K}\Bigl(\;\sum
L^{i_1j_1}_{i_0 j_0} \otimes 
G^{i_2}_{i_1} \otimes 
L^{j_2 i_3\, \ast}_{j_1 i_2} \otimes 
G^{j_3}_{j_2}
\Bigr)
\end{align}
for any $i_0,j_0, i_3, j_3 \in \{0,1\}$,
where the sums are taken over $i_1, i_2, j_1, j_2 \in \{0,1\}$.
The explicit form of the sixteen equations ({\ref{qre2}) is 
available in \cite[App. A]{KO1}\footnote{The operators 
$\Apm, \OK, \apm, \ok$ in \cite{KO1} has the same meaning as 
the ones in this paper, although the 3D $K$ is denoted by 
$\mathscr{K}$ rather than $\mathcal{K}$ here.}.
We remark that (\ref{qre2}) becomes actually {\em independent} of the parameters 
$\epsilon, s, s'$ entering $L$ (\ref{L}) and $G$ (\ref{G}) as long as the 
constraint (\ref{qp}) is satisfied.
The 3D $K$ $\mathcal{K}$ is uniquely determined from the 
quantized reflection equation 
up to normalization as in \cite[eqs.(3.27)-(3.28)]{KO1}.
Then the matrix product solutions (\ref{ktr})--(\ref{k22}) to the 
ordinary reflection equation are constructed following the 
reduction method explained in \cite[Sec.5]{KP}
assuming \cite[eq.(78)]{KP} for $U_p(\mathfrak{g}^{r,r'})$.

\section{Generators of $\mathcal{B}^{r,r'}_{k,k'}$}\label{app:B}

Let us write down the generators 
$b_0, \ldots, b_n$ of $\mathcal{B}^{r,r'}_{k,k'}$ 
specified in (\ref{b0})--(\ref{drk}) explicitly.
For any $\mathcal{B}^{r,r'}_{k,k'}$ with 
$r\le k$ and  $r'\le k'$ we have 
\begin{align}
b_i 
= f_i +p^{2}k^{-1}_ie_i 
+ \frac{\I \epsilon p}{1-p^2}k^{-1}_i
\quad (0 < i <n).
\end{align}
In what follows we list $b_0, b_n$ only.
\begin{align}
\mathcal{B}^{1,1}_{1,1}:\qquad 
b_0 
&= f_0 + \frac{s'}{s}p^{}k_0^{-1}e_0
+ 
\frac{\epsilon + \I p}{s(1-p^2)}k^{-1}_0,
\\
b_n 
&= f_n + \frac{s}{s'}p^{} k^{-1}_ne_n 
+
\frac{\epsilon +\I p}{s'(1-p^2)}k^{-1}_n,
\label{b11}
\\
\mathcal{B}^{2,1}_{2,1}:\qquad 
b_0 
&= f_0 + \frac{s'^2}{s^2}p^{2}k_0^{-1}e_0
+ 
\frac{1}{s^2(1-p^2)}k^{-1}_0,
\\
b_n 
&= f_n + \frac{s}{s'}p^{}  k^{-1}_ne_n
+
\frac{\epsilon +\I p}{s'(1-p^2)}k^{-1}_n,
\label{b21}
\\
\mathcal{B}^{1,1}_{2,1}:\qquad
b_0 &= f_0 +  \frac{s'}{s}p^{}k_0^{-1}e_0,
\\
b_n &=  f_n + \frac{s}{s'}p^{} k^{-1}_n e_n
+\frac{\epsilon +\I p}{s'(1-p^2)}k^{-1}_n, 
\\
\mathcal{B}^{1,2}_{1,2}:\qquad 
b_0 
&= f_0 + \frac{s'}{s}p^{}k_0^{-1}e_0
+ 
\frac{\epsilon + \I p}{s(1-p^2)}k^{-1}_0,
\\
b_n 
&= f_n + \frac{s^2}{s'^2}p^{2} k^{-1}_ne_n 
+
\frac{1}{s'^2(1-p^2)}k^{-1}_n,
\\
\mathcal{B}^{1,1}_{1,2}:\qquad 
b_0 
&= f_0 + \frac{s'}{s}p^{}k_0^{-1}e_0
+ 
\frac{\epsilon + \I p}{s(1-p^2)}k^{-1}_0,
\\
b_n &= f_n + \frac{s}{s'}p^{}  k^{-1}_ne_n,
\\
\mathcal{B}^{2,2}_{2,2}:\qquad  
b_0 
&= f_0 + \frac{s'^2}{s^2}p^{2}k_0^{-1}e_0
+ 
\frac{1}{s^2(1-p^2)}k^{-1}_0,
\\
b_n 
&= f_n + \frac{s^2}{s'^2}p^{2}  k^{-1}_ne_n
+
\frac{1}{s'^2(1-p^2)}k^{-1}_n,
\\
\mathcal{B}^{2,1}_{2,2}:\qquad
b_0 
&= f_0 + \frac{s'^2}{s^2}p^{2}k_0^{-1}e_0
+ 
\frac{1}{s^2(1-p^2)}k^{-1}_0,
\\
b_n &= f_n + \frac{s}{s'}p^{} k^{-1}_ne_n ,
\\
\mathcal{B}^{1,2}_{2,2}:\qquad
b_0 &= f_0 +  \frac{s'}{s}p^{}k_0^{-1}e_0,
\\
b_n 
&= f_n + \frac{s^2}{s'^2}p^{2} k^{-1}_ne_n 
+
\frac{1}{s'^2(1-p^2)}k^{-1}_n,
\\
\mathcal{B}^{1,1}_{2,2}:\qquad
b_0 &= f_0 +  \frac{s'}{s}p^{}k_0^{-1}e_0,
\\
b_n &= f_n + \frac{s}{s'}p^{}  k^{-1}_ne_n.
\end{align}

\section*{Acknowledgments}
A.K. thanks Vladimir Mangazeev for sending the preprint \cite{MLu}.
He is supported by Grants-in-Aid for Scientific Research 
No.~18H01141 from JSPS.
M.O. is supported by Grants-in-Aid for Scientific Research No.~15K13429
and No.~16H03922 from JSPS.


\begin{thebibliography}{99}

\bibitem{AM}
M.~F.~Atiyah and I.~G.~MacDonald,
{\it Introduction to Commutative Algebra}, 
Reading, MA: Addison-Wesley (1969).

\bibitem{BB}
P.~Baseilhac, S.~Belliard,
Generalized $q$-Onsager algebras and boundary affine Toda field theories,
Lett. Math. Phys. {\bf 93} 213--228 (2010). 

\bibitem{BFKZ} M.~T.~Batchelor, V.~Fridkin, A.~Kuniba, Y.~K.~Zhou, 
Solutions of the reflection equation for face 
and vertex models associated with 
$A^{(1)}_n$,
$B^{(1)}_n$,
$C^{(1)}_n$,
$D^{(1)}_n$ and 
$A^{(2)}_n$,
Phys. Lett. B{\bf 376} 266--274 (1996).

\bibitem{Bax}R.~J.~Baxter,
\textit{Exactly solved models in statistical mechanics},
Academic Press (1982).

\bibitem{BS}
V.~V.~Bazhanov and S.~M.~Sergeev, 
Zamolodchikov's tetrahedron equation 
and hidden structure of quantum groups, 
J. Phys. A: Math. Theor. {\bf 39}  3295--3310  (2006).

\bibitem{Ch}I.~V.~Cherednik,
Factorizing particles on a half-line and root systems,
Theor. Math. Phys. {\bf 61}  35--44 (1984).

\bibitem{DO}E.~Date and M.~Okado,
Calculation of excitation spectra of the spin model related the
vector representaton of the quantized affine algebra of type $A^{(1)}_n$,
Int. J. Modern. Phys. A {\bf 9} 399--417 (1994). 

\bibitem{DM}
G.~Delius and N.~J.~MacKay,
Quantum group symmetry in sine-Gordon and affine Toda field theories on the half-line,
Commun. Math. Phys. {\bf 233}  173--190 (2003).

\bibitem{D}
V.~G.~Drinfeld,
Quantum groups. 
Proceedings of the ICM, Vols. 1, 2
(Berkeley, Calif., 1986), pp. 798--820.  
Am. Math. Soc., Providence (1987).

\bibitem{G}
G.~M.~Gandenberger,
New non-diagonal solutions to the $a^{(1)}_n$ boundary Yang-Baxter equation,
arXiv:hep-th/9911178.

\bibitem{Ji}
M.~Jimbo,
A $q$-difference analogue of $U(g)$ and the Yang-Baxter equation,
Lett. Math. Phys. {\bf 10} 63--69 (1985).

\bibitem{Kac}
V.~G.~Kac,
{\it Infinite dimensional Lie algebras}, third ed.,
Cambridge University Press (1990).

\bibitem{Ko} S.~Kolb, 
Quantum symmetric Kac-Moody pairs. Adv. Math. {\bf 267}  395--469 (2014).

\bibitem{Kul} P.~P.~Kulish,
Yang-Baxter equation and reflection equations in integrable models,
in {\em Low-dimensional models in statistical physics and quantum field theory} 
(Schladming, 1995), Lect. Note. Phys. {\bf 469} 125--144.

\bibitem{KO1} 
A.~Kuniba and M.~Okado,
Tetrahedron and 3D reflection equations
from quantized algebra of functions,
J. Phys. A: Math.Theor. {\bf 45}  465206 (27pp) (2012).

\bibitem{KOY} A.~Kuniba, M.~Okado, A.~Yoneyama,
Matrix product solution to the reflection equation 
associated with a coideal subalgebra of $U_q(A^{(1)}_{n-1})$,
arXiv:1812.03767, Lett. Math. Phys. in press.

\bibitem{KP}
A.~Kuniba and V.~Pasquier,
Matrix product solutions 
to the reflection equation from three dimensional integrability,
J. Phys. A: Math. Theor. {\bf 51} 255204 (26pp)  (2018).

\bibitem{KS} 
A.~Kuniba and S.~Sergeev,
Tetrahedron equation and quantum $R$ matrices for spin
representations of  
$B^{(1)}_n$, $D^{(1)}_n$ and $D^{(2)}_{n+1}$,
Commun. Math. Phys. {\bf 324} 695--713 (2013).

\bibitem{ML} R.~Malara and A.~Lima-Santos, 
On 
$A^{(1)}_{n-1}$,
$B^{(1)}_{n}$,
$C^{(1)}_{n}$,
$D^{(1)}_{n}$,
$A^{(2)}_{2n}$,
$A^{(2)}_{2n-1}$ and $D^{(2)}_{n+1}$ reflection $K$-matrices,
J. Stat. Mech. {\bf 0609} P09013 (2006).

\bibitem{MLu}V.~V.~Mangazeev and  X.~Lu,
Boundary matrices for the higher spin six vertex model,
arXiv:1903.00274.

\bibitem{NR}R.~I.~Nepomechie and A.~L.~Retore,
Surveying the quantum group symmetries of integrable open spin chains,
arXiv:1802.04864.

\bibitem{O}M.~Okado,
Quantum $R$ matrices related to the spin representations of
$B_n$ and $D_n$,
Commun. Math. Phys. {\bf 134} 467--486 (1990).

\bibitem{On}L.~Onsager,
Crystal statistics. I. A two-dimensional model with an order-disorder transition,
Phys. Rev. {\bf 65} 117--149  (1944). 

\bibitem{RV}
V.~Regelskis and B.~Vlaar,
Reflection matrices, coideal subalgebras 
and generalized Satake diagrams of affine type,
arXiv:1602.08471.

\bibitem{Sk} E.~K.~Sklyanin,
Boundary conditions for integrable quantum systems,
J. Phys. A: Math. Gen. {\bf 21} 2375--2389 (1988).

\bibitem{T} P.~Terwilliger,
An algebraic approach to the Askey scheme of orthogonal polynomials,
in {\it Orthogonal Polynomials and Special Functions},  
Lect. Note. in Math. {\bf 1883}
Springer 225--330 (2006). 

\end{thebibliography}
\end{document}